\numberwithin{equation}{section} \makeatletter
\newtheorem{prop}{Proposition}[section]
\newtheorem{defn}{Definition}[section]
\begin{document}

\def\mytitle{Rigid Symmetries and Conservation Laws \\ in Non-Lagrangian Field Theory}

\pagestyle{myheadings}
\markboth{\textsc{\small D. Kaparulin, S. Lyakhovich and A. Sharapov}}{%
  \textsc{\small Symmetries and Conservation Laws }}
\addtolength{\headsep}{4pt}

\begin{flushright}\small
\end{flushright}
\vspace{2cm}
\begin{centering}

  \vspace{1cm}

  \textbf{\Large{\mytitle}}

  \vspace{1.5cm}

  {\large D.S. Kaparulin, S.L. Lyakhovich and A.A. Sharapov}

\vspace{.5cm}

\begin{minipage}{.9\textwidth}\small \it \begin{center}
   Department of Quantum Field Theory, Tomsk State University,\\
   Lenin ave. 36,  Tomsk 634050, Russia
   \end{center}
\end{minipage}

\end{centering}

\vspace{1cm}

\begin{center}
  \begin{minipage}{.9\textwidth}
\textsc{Abstract}. Making use of the Lagrange anchor construction
introduced earlier to quantize non-Lagrangian field theories, we
extend the Noether theorem beyond the class of variational
dynamics.
  \end{minipage}
\end{center}


\vfill

\noindent \mbox{}
\raisebox{-3\baselineskip}{%
  \parbox{\textwidth}{\mbox{}\hrulefill\\[-4pt]}}
{\scriptsize$^*$ This work was partially supported by the RFBR
grant 09-02-00723-a,  by the grant from Russian Federation
President Programme of Support for Leading Scientific Schools no
3400.2010.2., by the State Contract no 02.740.11.0238 from Russian
Federal Agency for Science and Innovation, and also by Russian
Federal Agency of Education under the State Contracts no P1337 and
no P22. SLL appreciates partial support from the RFBR grant
08-01-00737-a. }

\thispagestyle{empty}
\newpage

\begin{small}
{\addtolength{\parskip}{-1.5pt}
 \tableofcontents}
\end{small}
\newpage

\section{Introduction}
In this paper we extend the Noether theorem beyond the class of
variational dynamics.

The classical field theory is completely defined by equations of
motion. While  the least action principle is not vital at
classical level, it provides a number of useful tools for studying
classical dynamics. In particular, given an action functional
enjoying  infinitesimal symmetries, one can derive conserved
currents. For non-variational equations, the infinitesimal
symmetries do not necessarily result in  the conservation laws and
vice versa. All the symmetries are divided into gauge and rigid.
The gauge  symmetries are unambiguously connected to Noether's
identities (also called the strong conservation laws) as far as
equations are variational. If the equations do not come from the
least action principle, this relationship (sometimes called the
second Noether theorem) is generally invalid: It is possible to
have gauge invariant field equations without Noether's identities
or equations possessing strong conservation laws without being
gauge invariant.

In the gauge field theory, the existence of action is a key
prerequisite for constructing the standard BRST theory either in the
Hamiltonian BFV form \cite{BFV} or in the BV field-anti-field
formalism \cite{BV} (see \cite{HT} for review). The BRST theory, in
its turn, provides the most general tools for quantizing gauge
theories \cite{BFV, BV, HT}. Also, the BRST formalism serves for
the study of various classical problems such as constructing
consistent interactions in gauge models \cite{H1} or identifying the
nontrivial conservation laws and rigid symmetries \cite{BBH}.

Now, it is known \cite{LS1} that the BRST theory can be extended
beyond the class of dynamical systems whose equations of motion
are variational. Classically, the BRST complex can always be
constructed for any regular system of field equations, variational
or not \cite{HT}, \cite{LS1}, \cite{KLS}. The quantization,
however, requires an extra ingredient, besides the field
equations. In the framework of the deformation quantization it is
the \textit{weak Poisson structure} \cite{LS1} (see also
\cite{CaFe}). The existence of the weak Poisson structure is much
less restrictive for the evolutionary equations than the
requirement to be Hamiltonian. In the framework of the
path-integral quantization, the corresponding extra ingredient was
first introduced in \cite{KLS} and called the \textit{Lagrange
structure}. Again,  the existence of the Lagrange structure is
much less restrictive for the equations than the requirement to be
variational or admit an equivalent variational reformulation.
Whenever the Lagrange structure is known for a given system of
equations, the theory can be consistently quantized in three
equivalent ways. First, the original non-Lagrangian field theory
in $d$ dimensions can be converted  into equivalent Lagrangian
topological field theory in $d+1$. The latter can be then
quantized by the usual BV procedure \cite{KLS}. Second, the
generating functional of Green's functions can be defined through
the generalized Schwinger-Dyson equation \cite{LS2}, \cite{LSDUY}.
Third, the original non-Lagrangian field theory can be embedded
into an augmented Lagrangian theory in the same $d$ dimensions
\cite{LS3}.

Besides providing the basis for quantizing non-Lagrangian field
theories, the Lagrange structure binds together gauge symmetries
and the strong conservation laws, although the relationship is
more relaxed than for the variational case \cite{KLS}. In the
first-order formalism, for example,  the constraints and gauge
symmetries remain fully unrelated \cite{LSDB} unless the equations
are Hamiltonian. A certain  correlation is set up by the  weak
Poisson structure, although it is not so rigid as the relation
between the first-class constraints and gauge symmetries in the
Hamiltonian constrained dynamics.

In this paper we will show that the Lagrange structure allows one
to connect the rigid symmetries with the conservation laws. This
relationship reveals itself in a different way than the link
between the gauge symmetries and the strong conservation laws in
non-Lagrangian theories equipped with the Lagrange structure. In
the next sections we give accurate definitions for symmetries,
identities and conservation laws and, by making use of the
Lagrange structure,  establish a relationship between them. The
general construction is then exemplified by a class of
non-Lagrangian models, where some fundamental symmetries and/or
conserved currents can be explicitly found and compared.

To give a preliminary impression of the connection between the
symmetries and conservation laws, below in the Introduction, we
reformulate the standard Noether theorem in the way that
illuminates the points which are important for extending the
theorem beyond the class of Lagrangian dynamics.

Let $M$ denote the configuration space of fields $\varphi^i$.
Hereinafter we use De Witt's condensed notation \cite{DW} whereby
the index $i$, labelling the fields, includes also the space-time
coordinates $\{x^\mu \}$.  The smooth functions $f(\varphi)$ on
$M$ are identified with the local functionals of fields. In terms
of the condensed notation, the Lagrangian equations of motion are
given by the components of an exact 1-form $\mathrm{d}S$ on $M$,
where $\mathrm{d}=\delta\varphi^i\frac{\delta}{\delta\varphi^i}$
is the variational differential and $S$ is the action functional.
In general, the field equations can well be the components of a
section of another bundle over $M$, not necessarily $T^\ast M$.
For example, the tensor type of equations can differ from that of
fields (see the next section). Here, however, we restrict our
discussion to equations whose left hand sides are given by the
components of a local 1-form $T=T_i(\varphi)\delta \varphi^i$ on
$M$. Then the necessary condition for the existence of an action
for the field equations $T_i(\varphi)=0$ requires $T$ to be a
closed 1-form
\begin{equation}\label{dT}
\mathrm{d} T = 0 \,.
\end{equation}
This is the well-known Helmholtz criterion  from the inverse
problem of the calculus of variations.

Recall that a local vector field $\Psi =
\Psi^i(\varphi)\frac{\delta}{\delta\varphi^i} $ on $M$ is called a
characteristic of the equations $T_i=0$  if
\begin{equation}\label{dPsi}
\mathrm{d} ( i_\Psi T ) \equiv 0 \, , \qquad i_\Psi T \equiv
\Psi^i  T_i\,.
\end{equation}
Since the value $i_\Psi T$ is annihilated by the variational
derivative, it must be the integral of a total divergence
$\partial_\mu j^\mu$. By definition, the latter vanishes on the
equations of motion, and hence $j^\mu$ is a conserved current.

Every vector field $\Psi$ generates a transformation
$\delta\varphi^i=\Psi^i(\varphi)$  of the space of fields $M$ and
we say that $\Psi$ is a \textit{proper} symmetry of the equations
of motion if
\begin{equation}\label{Sym}
 L_\Psi T=0 \, ,
\end{equation}
where $L_\Psi$ is the Lie derivative along $\Psi$. For the
variational equations of motion $T=\mathrm{d}S$ the proper
symmetries are clearly the symmetries in the usual sense. Making
use of Cartan's formula
\begin{equation}\label{Cart}
  L_\Psi T= i_\Psi (\mathrm{d} T)  + \mathrm{d} ( i_\Psi T) \, ,
\end{equation}
we see that the proper symmetry $\Psi$ is a characteristic if
$i_\Psi (\mathrm{d} T) =0 $. For the variational equations of
motion, satisfying the Helmholtz condition (\ref{dT}), the last
formula identifies symmetries with characteristics. It is the
statement which is known as the Noether theorem.

Let us summarize the lessons that can be learned  from the above
formulation of the Noether theorem towards its extension beyond
the scope  of variational dynamics. The construction of
characteristics from symmetries involves, besides the equations of
motion, one more crucial ingredient, namely, the variational
exterior differential $\mathrm{d}$ that maps the variational
$p$-forms to the $(p+1)$-forms. In the particular case discussed
above, the equations of motion are given by 1-forms. Three main
properties of the differential $\mathrm{d}$ are relevant for
establishing the link between the symmetries and conservation
laws. First, the field equations are $\mathrm{d}$-closed
(\ref{dT}). Second, the exterior  differential and the Lie
derivative are related to each other through Cartan's formula
(\ref{Cart}). Third, the kernel of the variational differential is
constituted by the integrals of  total divergencies $\partial_\mu
j^\mu$. These three facts taken together imply that the symmetries
can be identified with the characteristics.

The general field equations need not be a section of the cotangent
bundle $T^\ast M$.  They may well be associated  with  another
vector bundle $\mathcal{E}\rightarrow M$ over the configuration
space of fields $M$, in which case the differential $\mathrm{d}$
does not work any more. Thus, one needs a more general operator
$\mathrm{d}_\mathcal{E}$ to replace the variational exterior
differential in the case of  non-Lagrangian equations of motion.
The aforementioned  Lagrange structure provides such an operator
that satisfies the first two properties. The third one is obeyed,
in a sense, only by weakly transitive Lagrange structures. Thus,
whenever the non-Lagrangian field equations enjoy a weakly
transitive Lagrange structure, the Noether theorem can be still
formulated in full. In the intransitive case, a weaker proposition
can be proved, connecting rigid symmetries with characteristics in
a more relaxed way.

\section{Classical fields}

As is well known, the geometry underlying  the cinematics of local
field theory is that of jet bundles. So we begin the section with
recalling some basics from the jet theory  relevant for our
purposes. For a systematic exposition  of the subject the reader
may consult the books \cite{KMS,Saund,Olv,KLV}.

The starting point of any local field theory is a locally trivial
fiber bundle $\pi: Y\rightarrow X$, which base $X$ is usually
identified with the space-time manifold and which sections,  i.e.,
the smooth maps $\varphi: X\rightarrow Y$ satisfying  $p\circ
\varphi=\mathrm{id}_X$, are called \textit{fields} or
\textit{histories}. The space of all histories (= sections of $Y$)
will be denoted by $M$. The typical fiber of $Y$, being given by
$U\simeq\pi^{-1}(x)$, $x\in X$, is usually referred to as the
\textit{target space of fields}.

The $r$-th jet prolongation of the fiber bundle $Y\rightarrow X$
is the bundle $\pi_r: J^rY\rightarrow X$, whose points are
$r$-jets. By definition, an $r$-jet  $j_x^r\varphi$ is the
equivalence class of a local section $\varphi$ of $Y$, where two
local sections $\varphi$ and $\varphi'$ are considered to be
equivalent if they have the same Taylor development of order $r$
at $x\in X$ in some (hence any) pair of coordinate charts centered
at $x$ and $\varphi(x)$. It follows from the definition that each
section $\varphi$ of $Y$ induces the section $j^r\varphi$ of
$J^rY$ by the rule $(j^r\varphi)(x)=j^r_x\varphi$. If $Y$ is
coordinatized by the numbers $\{x^\mu, u^i\}$, where $\{x^\mu\}$
and $\{u^i\}$ are local coordinates in $X$ and $U$, then $(x^\mu,
u^i, u^i_{\mu_1}, u^i_{\mu_1\mu_2}, \ldots, u^i_{\mu_1\cdots
\mu_{r}})$ are local coordinates in $J^rY$ and the section
$j^r\varphi$ is given in these coordinates by
\begin{equation}
x \mapsto (x, \varphi^i(x),
\partial_{\mu_1}\varphi^i(x),\partial_{\mu_1}\partial_{\mu_2}\varphi^i(x),\ldots,\partial_{\mu_1}\ldots\partial_{\mu_r}\varphi^i
(x))\,.
\end{equation}

Classical dynamics on $Y\rightarrow X$ are specified by imposing
differential equations. An $r$-th order differential equation is,
by definition, a closed imbedded submanifold $S \subset J^rY$. A
solution, or \textit{true history} $\varphi$, is a section of $Y$
satisfying $j^r\varphi \in S$. The true histories form a subspace
$\Sigma$ in the space of all histories $M$. In classical field
theory, the differential equations are mostly defined by
(nonlinear) differential operators associated to vector bundles.
Let $E\rightarrow X$ be a vector bundle. The $r$-th order
$E$-valued differential operator is a map $T: M\rightarrow
\Gamma(E)$ such that the value of the section $T(\varphi)$  at
$x\in X$ is fully determined by  $j^r_x\varphi$. This definition
assumes the existence of a bundle morphism $\Theta:
J^rY\rightarrow E$ such that $T(\varphi)=\Gamma(\Theta)\circ
j^r(\varphi)$. The differential equation $S\in J^rY$ is then
identified with the pre-image of the base $X\subset E$, considered
as the zero section, under the bundle map $\Theta$. In each local
coordinate chart on $E\oplus J^rY$, the condition that $\varphi$
belongs to the kernel of the map $T$ takes the form of partial
differential equations
\begin{equation}\label{}
    T_a(x, \varphi^i,
\partial_{\mu_1}\varphi^i,\partial_{\mu_1}\partial_{\mu_2}\varphi^i,\ldots,\partial_{\mu_1}\ldots\partial_{\mu_r}\varphi^i
)=0\,,
\end{equation}
where $\{T_a\}$ are components of the section $T(\varphi)$ with
respect to some frame $\{e^a\}$ in $E$.

Though rigor and geometrically transparent, the language of jets
appears to be unduly cumbersome in developing general
field-theoretical constructions where the local structure of
fields, while important, is out of focus. For this reason we will
use a little bit loose but much more handy notation known in the
physical literature as De Witt's condensed notation \cite{DW}.
According to this notation, the fields $\varphi^i(x)$ are treated
as local coordinates on the infinite dimensional manifold $M$; in
so doing,  the index $i$ also includes the local coordinates
$\{x^\mu\}$ on $X$ so that $\varphi^i\equiv \varphi^i(x)$.
Similarly, the sections $s =s_a(x)e^a$ of the vector bundle
$E\rightarrow X$ are considered to be linear coordinates
$s_a\equiv s_a(x)$ on the infinite dimensional vector space
$\Gamma(E)$. Then  one can view the differential operator $T:
M\rightarrow \Gamma(E)$ as a section of the trivial, infinite
dimensional vector bundle $\mathcal{E}=M\times \Gamma(E)$ over
$M$.  Of a particular importance for our consideration will be
differential operators with values in differential forms on $X$,
this is the case of $E=\wedge^\bullet T^\ast X$. We will denote
the space of all such operators by $\Omega^\bullet$. The scalar
differential operators are, by definition, the elements of the
space $\Omega^0$. Let $\nu$ be a fixed volume form on $X$. By
smooth functions on $M$ we understand the local functionals of
fields. These are given by integrals
\begin{equation}\label{}
   f[\varphi]=\int_X \nu F(\varphi)
\end{equation}
of scalar differential operators $F$ evaluated for fields
$\varphi\in M$ of compact support.  The relation between local
functionals and scalar differential operators is not generally
one-to-one. For example, if $\partial X=\varnothing$ and $F$ and
$F'$ are two scalar operators such that $\nu(F'-F)=dj$ for  some
$j\in \Omega^{n-1}$, $n=\mathrm{dim} X$, then $f'=f$. We say that
two local functionals $f$ and $f'$ are equivalent modulo boundary
terms, $f\doteq f'$, if they differ by a local functional of the
form $\int_X dj(\varphi)$. By the Stokes theorem the equivalence
implies that $f'-f=\int_{\partial X} j(\varphi)$.

In the sequel, for the sake of simplicity, we will assume that
$Y\rightarrow X$ is a trivial vector bundle over a contractible
domain in $\mathbb{R}^n$. Then, one has a simple criterion for the
local functional $f$ to vanish modulo boundary terms \cite{HT},
\cite{BBH}, \cite{Olv}:
\begin{equation}\label{}
     \mathrm{d}f\equiv \int_X\nu \frac{\delta f}{\delta
    \varphi^i}\delta\varphi^i=0   \quad \forall \varphi\quad \mbox{iff} \quad  f\doteq 0\,.
\end{equation}
Of course, in a topologically nontrivial situation only the ``if''
part of the criterion holds true. Following the finite dimensional
pattern, we can  interpret the variations $\mathrm{d}f$ of local
functionals as exact 1-forms on $M$, with $\{\delta \varphi^i\}$
being a local frame in $T^\ast M$. The components of the 1-form
$\mathrm{d}f$ define the Euler-Lagrange differential operator
$\delta f/\delta\varphi^i$.  In a similar manner, one can
introduce the tangent bundle $TM$ as the vector bundle which
sections are \textit{evolutionary vector fields}
\begin{equation}\label{Ev}
    V=\int_X \nu
    V^i(x,\varphi,\ldots,\partial_{\mu_1}\ldots\partial_{\mu_r}
    \varphi)\frac{\delta}{\delta \varphi^i(x)}\,.
\end{equation}
Every evolutionary vector field $V$ generates a  flow $\Phi_t^V$
on the space of all histories $M$, which is described by the
evolution-type system $\partial_t\varphi^i=V^i$ of PDEs, hence the
name.

Given a fiber bundle $Y\rightarrow X$, the correspondence
$E\mapsto \mathcal{E}$ is quite natural in the sense that it
allows one to extend all the usual tensor operations from the
finite dimensional vector bundles to the corresponding infinite
dimensional ones. In particular, one can define the dual of the
vector bundle $\mathcal{E}$ as the vector bundle
$\mathcal{E}^\ast$ corresponding to $E^\ast$. There is a natural
pairing between sections of $\mathcal{E}$ and $\mathcal{E}^\ast$
defined by the integral
\begin{equation}\label{}
    \langle S, T \rangle =\int_X \nu \langle
    S(\varphi),T(\varphi)\rangle\qquad \forall S\in
    \Gamma(\mathcal{E}^\ast)\,, \forall T\in \Gamma(\mathcal{E})\,.
\end{equation}
As is seen the result of pairing is a smooth function on $M$. In
accordance with De Witt's convention, we can write the last
expression just as the sum $S^aT_a$, where the repeated index $a$
implies also integration over $X$.

Each homomorphism $h:E\rightarrow E'$ induces the homomorphism
${H}: \mathcal{E}\rightarrow \mathcal{E}'$ of the associated
vector bundles through the action  on their fibers, $\Gamma(h):
\Gamma(E)\rightarrow \Gamma(E')$. But the infinite dimensional
vector bundles $\mathcal{E}\rightarrow M$  admit more general
homomorphisms. By definition, a general homomorphism $H\in
\mathrm{Hom}(\mathcal{E}, \mathcal{E}')$ is given by a
differential operator
\begin{equation}\label{}
    H: \Gamma(Y\oplus E)\rightarrow \Gamma(E')
\end{equation}
such that the section $H(\varphi, s)\in \Gamma(E')$, where
$\varphi\in M$ and $s\in \Gamma(E)$, depends $\mathbb{R}$-linearly
on the second argument. As usual the homomorphism $H$ induces the
homomorphism on sections. Namely, $\Gamma(H)$ takes a section
$(\varphi, T(\varphi))\in \Gamma(Y\oplus E)$ of $\mathcal{E}$ to
the section $H(\varphi, T(\varphi))$ of $\mathcal{E}'$.

We define the transpose of a homomorphism $H\in
\mathrm{Hom}(\mathcal{E}_1,\mathcal{E}_2)$ as a unique
homomorphism $H^\ast\in
\mathrm{Hom}(\mathcal{E}_2^\ast,\mathcal{E}_1^\ast)$ satisfying
the property
\begin{equation}\label{}
    \langle \Gamma(H)(W),P\rangle \dot =\langle
    W,\Gamma(H^\ast)(P)\rangle
\end{equation}
for all  $W\in
    \Gamma(\mathcal{E}_1)$ and $P\in \Gamma(\mathcal{E}_2^\ast)$.

The direct product $\mathcal{E}_1\oplus \mathcal{E}_2$ is defined
to be the bundle corresponding to $E_1\oplus E_2$.
 Finally, we define the tensor product $$\mathcal{E}_0\otimes \mathcal{E}_1\otimes\cdots\otimes
\mathcal{E}_m$$ as the trivial vector bundle over $M$ whose
sections are differential operators
\begin{equation}
H:\Gamma (Y\oplus E^\ast_1\oplus\cdots \oplus E^\ast_m)\rightarrow
\Gamma(E_0)
\end{equation}
that are linear in the sections of $\Gamma(E^\ast_k)$,
$k=1,\ldots,m$. In case $m=1$, we have the usual isomorphism
$\Gamma(\mathcal{E}_0\otimes \mathcal{E}_1)\simeq
\mathrm{Hom}(\mathcal{E}^\ast_1, \mathcal{E}_0)$.  Applying this
construction to the cotangent bundle $T^\ast M$, we define the
$p$-th exterior power of the cotangent  bundle $\wedge^p T^\ast
M$, whose sections are $p$-forms on $M$. The operator of
variational derivative mentioned above gives rise to the exterior
differential $\mathrm{d}: \Gamma(\wedge^{p}T^\ast M)\rightarrow
\Gamma(\wedge^{p+1}T^\ast M)$ with the property $\mathrm{d}^2=0$.

One can view the theory of jet bundles as a ``differential
geometry with higher derivatives''.  The advantage of the
condensed notation over jets is that it brings the subject back
into the more familiar geometric framework  without higher
derivatives at the cost of passing to infinite dimensional
manifolds. Of course, care must be exercised when using the
standard differential-geometric constructions  in the infinite
dimensional setting. It is particularly important to keep in mind
that unlike the finite dimensional case, the space of sections
$\Gamma(\mathcal{E})$ is not a module over the space of smooth
functions $C^\infty(M)$ (the latter consists of local
functionals). Although it is possible to think of
$\Gamma(\mathcal{E})$ as a module over the scalar differential
operators $\Omega^0$ with respect to the pointwise multiplication
of functions on $X$, the induced action of $\mathrm{Hom}
(\mathcal{E},\mathcal{E}')$ on $\Gamma(\mathcal{E})$ is only
$\mathbb{R}$-linear, not $\Omega^0$-linear. Finally,
$\Gamma(\mathcal{E}_1\otimes \mathcal{E}_2)\neq
\Gamma(\mathcal{E}_1)\otimes \Gamma(\mathcal{E}_2)$.

\section{Symmetries, identities and conservation laws}

The discussion of the previous section can be summarized by saying
that  the classical dynamics of fields are fully specified by a
section $T$ of some vector bundle $\mathcal{E}\rightarrow M$ over
the space of all histories. For this reason we call $\mathcal{E}$
the \textit{dynamics bundle}. The subspace of true histories
$\Sigma\subset M$ is defined to be the zero locus of $T$,
\begin{equation}\label{}
\Sigma=\{\varphi\in M| \;T(\varphi)=0\}\,.
\end{equation}
Using the physical terminology, we will refer to $\Sigma$ as the
\textit{shell}. The field equations $T(\varphi)=0$ are supposed to
satisfy the standard regularity conditions usually assumed for
PDEs, see e.g. \cite{HT}, \cite{BBH}, \cite{Saund}. These
conditions ensure that any section $S\in \Gamma(\mathcal{E}')$
vanishing on $\Sigma$ is representable in the form
\begin{equation}\label{RC}
    S=\Gamma(H)(T)
\end{equation}
for some homomorphism $H\in
\mathrm{Hom}(\mathcal{E},\mathcal{E}')$.

The present section contains no original results, we just explain
our terminology and briefly run through the notions of symmetry,
identities and conservation laws, which are studied in the next
sections.

To begin with, we introduce the homomorphism $J: TM\rightarrow
\mathcal{E}$ defined by\footnote{By abuse of notation, we will use
the same symbol to denote a vector bundle homomorphism and the
induced  homomorphism on sections. In particular, we will write
$J$ instead of more pedantic $\Gamma(J)$.}
\begin{equation}
\langle J(X),\Psi\rangle=X\cdot \langle T , \Psi\rangle\qquad
\forall X\in \Gamma(TM),\quad  \forall \Psi\in \Gamma(E^\ast)\,,
\end{equation}
$\Psi$ being understood as a field independent section of
$\mathcal{E}^\ast$. The homomorphism $J$ is called \cite{KLV} the
\textit{universal linearization} of the nonlinear differential
operator $T$. We also introduce the sign of weak equality
$S\approx S'$ borrowed from Dirac's constrained dynamics. It means
that two sections $S$ and $S'$ of some vector bundle over $M$
coincide \textit{on shell}, i.e.,  $S|_\Sigma =S'|_\Sigma$.

An evolutionary vector field $X$ is said to be a \textit{symmetry}
of the classical system if it preserves the shell $\Sigma$. This
amounts to the weak equality
\begin{equation}\label{sym}
    J(X)\approx 0\,.
\end{equation}
The symmetries form a subalgebra $\mathrm{Sym}'(T)\subset
\Gamma(TM)$ in the Lie algebra of all vector fields on $M$.

A section $\Psi\in \Gamma(\mathcal{E}^\ast)$ is said to be an
\textit{identity} of the classical system if
\begin{equation}\label{id}
\langle\Psi,T\rangle \doteq  0\,.
\end{equation}
 As a differential consequence of the last equality we have
\begin{equation}\label{id'}
J^\ast(\Psi)\approx 0\,.
\end{equation}
Denote by $\mathrm{Id}'(T)\subset \Gamma(\mathcal{E}^\ast)$ the
subspace of all identities.

 The notions of symmetry and identity can be further elaborated on.
Observe that any vector field $X$ that vanishes on the shell,
i.e., $X\approx 0$, is a symmetry in the sense of (\ref{sym}). The
on-shell vanishing symmetries are called \textit{trivial}. They
are present in any classical theory, containing no valuable
information about the structure of dynamics. Fortunately, being
proportional to the equations of motion, the trivial symmetries
constitute an ideal $\mathrm{Sym}_0(T)\subset \mathrm{Sym}'(T)$ in
the Lie algebra of all symmetries and can thus be systematically
disregarded. So in what follows, by a symmetry we will actually
mean  an element of the factor algebra
$\mathrm{Sym}(T)=\mathrm{Sym}'(T)/\mathrm{Sym}_0(T)$.

A symmetry $X\in \mathrm{Sym}(T)$ is called a \textit{gauge
symmetry}  if there exists  a vector bundle
$\mathcal{F}\rightarrow M$ together with a section $\varepsilon\in
\Gamma(\mathcal{F})$ and a homomorphism $R\in \mathrm{Hom}
(\mathcal{F}, TM)$ such that $\mathrm{Im} R \subset
\mathrm{Sym}(T)$  and $X=R(\varepsilon)$. If the vector bundle
$\mathcal{F}$ is big enough to accommodate  any (nontrivial) gauge
symmetry with a fixed $R$, then we refer to  $\mathcal{F}$ as the
\textit{gauge algebra bundle}. Notice that $J\circ R\approx 0$.
The homomorphism $R$ defines an (over)complete basis of gauge
generators. Clearly, the vector distribution $\mathrm{Im}R \subset
TM$ is on-shell involutive, and hence it defines a foliation of
$\Sigma$. The leaves  of this foliation are called \textit{gauge
orbits}. Two points $p,q\in\Sigma$ are considered to be (gauge)
equivalent, $q\sim p$, if they belong to the same gauge orbit. The
quotient space $\Sigma/\sim$ is termed as a \textit{covariant
phase space}. It is the space of all physical states of the
system.

Since the gauge symmetries form an ideal  $\mathrm{GSym}(T)\subset
\mathrm{Sym}(T)$, one can define the factor algebra
$\mathrm{RSym}(T)=\mathrm{Sym}(T)/\mathrm{GSym(T)}$, which is
naturally identified with the Lie algebra of all \textit{rigid
symmetries}. By definition, the rigid symmetries are transverse to
the gauge orbits and thus they induce a nontrivial action on the
phase space of physical states.

Equation (\ref{id}) admits a lot of trivial solutions that can be
constructed as follows. Take an arbitrary section of $K\in
\Gamma(\mathcal{E}^\ast\wedge \mathcal{E}^\ast)$ and consider it
as a homomorphism from
$\mathrm{Hom}(\mathcal{E},\mathcal{E}^\ast)$. Then $\Psi=K(T)$
satisfies (\ref{id}). The  identities constructed in such a way
are of no physical significance  and, therefore, they should be
regarded as trivial. Notice that each trivial identity vanishes on
shell. The converse implication is also true \cite{HT}: Each
on-shell vanishing identity has the form $\Psi=K(T)$ for some
skew-symmetric $K$. The proof exploits the regularity assumption
for the equations of motion. The nontrivial identities are then
defined as the elements of the quotient space
$\mathrm{Id}(T)=\mathrm{Id}'(T)/\mathrm{Id}_0(T)$, where
$\mathrm{Id}_0(T)\subset \mathrm{Id}(T)$ is the subspace of
trivial identities.

An identity $\Psi\in \mathrm{Id}(T)$ is called  \textit{Noether's
identity} if there exists a vector bundle $\mathcal{G}\rightarrow
M$ together with a section $\xi\in \Gamma(\mathcal{G})$ and a
homomorphism $Z\in \mathrm{Hom}(\mathcal{G}, \mathcal{E}^\ast)$
such that $\mathrm{Im} Z\subset \mathrm{Id}(T)$ and $\Psi=Z(\xi)$.
Denote by $\mathrm{NId}(T)\subset \mathrm{Id}(T)$ the subspace of
all Noether's identities and define the quotient space
$\mathrm{Char}(T)=\mathrm{Id}(T)/\mathrm{NId}(T)$, whose elements
are called \textit{characteristics}. Again, one can choose the
vector bundle $\mathcal{G}$ to be big enough so that
$\mathrm{NId}(T)=\mathrm{Im} Z$. In this case, $Z$ is said to
define an (over)complete basis of the generators of  Noether's
identities. Notice that $Z^\ast(T)=0$, where $Z^\ast$ is transpose
of $Z$. It is quite natural to call $\mathcal{G}$ the
\textit{bundle of the Noether identities}.

All we have said above can be concisely reformulated in terms of
the following sequence of homomorphisms
\begin{equation}\label{4t-seq}
    \xymatrix{0\ar[r]&
    \Gamma(\mathcal{F})\ar[r]^-{R}&\Gamma(TM)
    \ar[r]^-{J}&\Gamma(\mathcal{E})\ar[r]^{Z^\ast}
    &\Gamma(\mathcal{G}^\ast)\ar[r]&
    0}
\end{equation}
and its transpose
\begin{equation}\label{4t-seq*}
\xymatrix{
    0&
    \Gamma(\mathcal{F}^\ast)\ar[l]&\Gamma(T^\ast
    M)\ar[l]_{R^\ast}&\Gamma(\mathcal{E}^\ast)\ar[l]_-{J^\ast}
    &\Gamma(\mathcal{G})\ar[l]_-{Z}&
    0\ar[l]}.
\end{equation}
Upon restriction to $\Sigma$ these sequences make cochain
complexes; the properties $Z^\ast\circ J\approx 0$ and $
J^\ast\circ Z\approx 0$ follow from the differential consequence
of the identity $Z^\ast(T)=0$. The spaces of all rigid symmetries
and characteristics are then naturally identified with the
cohomology groups
\begin{equation}\label{}
    \mathrm{RSym}(T) \simeq \frac{\mathrm{Ker}
    J|_\Sigma}{\mathrm{Im}R|_\Sigma}\,,\qquad
    \mathrm{Char}(T) \simeq\frac{\mathrm{Ker} J^\ast|_\Sigma}{\mathrm{Im}
    Z|_{\Sigma}}\,.
\end{equation}

Intimately  related  to the notion of an identity is the notion of
a \textit{conservation law}. The latter  is identified with a
form-valued differential operator $j\in \Omega^{n-1}$ taking true
histories to closed $(n-1)$-forms,
\begin{equation}\label{}
    dj\approx 0\,.
\end{equation}
In view of the regularity assumptions, the $n$-form $dj$ has to be
proportional to the equations of motion, that is
\begin{equation}\label{icl}
    \int_X dj=\langle\Psi,T\rangle
\end{equation}
for some $\Psi\in \Gamma(\mathcal{E}^\ast)$. Comparing the last
equality with (\ref{id}), we see that every on-shell closed
$(n-1)$-form gives rise to some (perhaps trivial) identity and
vice versa. The relation
\begin{equation}
\label{rel} \mbox{(Identities)}\leftrightarrow\mbox{(Conservation
Laws)}
\end{equation}
is far from being  a bijection, since neither left nor right hand
sides of (\ref{icl}) are defined by $j$ and $\Psi$ unambiguously.
For instance, one may add to $j$ any closed (and hence exact) form
$di$ without changing $\Psi$. The exact forms $di\in \Omega^{n-1}$
are characterized by zero charge and, therefore, one should regard
them as trivial. Another source of triviality is related to the
currents $j$ that vanish on shell. Taking into account either  of
possibilities, we identify the space of nontrivial conservation
laws $\mathrm{CL}(T)$ with the equivalence classes of on-shell
closed forms $j\in \Omega^{n-1}$. Two such forms are considered as
equivalent whenever they differ by an on-shell exact
form\footnote{This definition does not exclude the possibility
that the charge, being related to a nontrivial conserved current,
vanishes identically. For discussion of this phenomenon, see
\cite{BH}.}:
\begin{equation}\label{}
j\sim j'\quad \Leftrightarrow \quad j-j'\approx di\,.
\end{equation}
The space $\mathrm{CL}(T)$ is also known as the
\textit{characteristic cohomology} of $\Sigma$ in form degree $n-1$,
see  \cite{BrGr}, \cite{Vin} and \cite{HKS}.

Turning to the right hand side of (\ref{icl}), one can see that
trivial identities give rise to trivial conservation laws. This is
an additional (in fact the main) reason to call the on-shell
vanishing identities trivial. Furthermore, the Noether identities
correspond to the trivial conservation laws as well.

Considering (\ref{rel}) modulo trivialities, one arrives at the
following relation
\begin{equation}\label{}
\mathrm{Char}(T)\leftrightarrow\mathrm{CL}(T) \,.
\end{equation}
This last relation is in essence an isomorphism  \cite{Olv},
\cite{BBH}, so that the problem of constructing nontrivial
conservation lows boils down to finding out characteristics.

\vspace{4mm} \noindent \textbf{Example}. Generally it is a rather
hard problem to identify  the spaces of all non-trivial symmetries
and characteristics for a given set of equations. The problem,
however, is considerably simplified for the ordinary differential
equations in normal form, where either of spaces admits a fairly
explicit description. By this reason we will  use this class of
dynamical systems to exemplify all the general notions and
constructions throughout the paper. Without loss of generality we
can restrict ourselves  to the systems of  first-order ODEs
\begin{equation}\label{ODE}
    T^i\equiv \dot x^i(t)+v^i(t,x)=0\,.
\end{equation}
Here the overdot stands for the derivative with respect to the
independent variable $t\in \mathbb{R}$ and $\{x^i\}$ are local
coordinates on $Y$.  Considering  $Y\times \mathbb{R}$ as a
trivial fiber bundle over $\mathbb{R}$, one can thought of $v$ as
a vertical vector field on $Y\times \mathbb{R}$. Equations
(\ref{ODE}) are neither reducible nor gauge invariant and the
dynamics bundle is naturally identified with the tangent bundle
$TM$ of the trajectory space $M=\mathrm{Map}(\mathbb{R},Y)$. By
definition, the characteristics of (\ref{ODE}) live in the space
$\Gamma(T^\ast M)$ spanned by 1-forms
\begin{equation}\label{l1f}
    \Psi=\int dt \psi_i(t, x,\dot x,\ldots,\stackrel{_{(m)}}{x})\delta
    x^i(t)\,.
\end{equation}
The system  (\ref{ODE}) being regular, one can exclude all the
derivatives from the integrand of (\ref{l1f}) by means of the
equations of motion, looking for characteristics of the form
\begin{equation}\label{psi}
    \Psi=\int dt \psi_i(t,x)\delta x^i(t)\,,
\end{equation}
where $\psi=\psi_i(t, x)dx^i$ is a vertical 1-form on $Y\times
\mathbb{R}$. Relation (\ref{id}) results in two conditions
\begin{equation}\label{vf}
\psi=\widetilde{d}f\,,\qquad  \partial_t f=v\cdot f\,,
\end{equation}
where  $f$ is a function on $Y\times \mathbb{R}$ and
$\widetilde{d}$ is the exterior differential on $Y$. Let
$C_v^\infty(Y\times \mathbb{R})$ denote the space of all smooth
functions satisfying equations (\ref{vf}). These equations specify
$f$ up to an additive constant and each such $f$ is conserved.
Thus we are led to the following isomorphisms:
\begin{equation}\label{}
    \mathrm{Char}(\dot x+v)\simeq C^\infty_v(Y\times \mathbb{R})/\mathbb{R}\,,\qquad \mathrm{CL}(\dot x+v)\simeq C^{\infty}_v(Y\times \mathbb{R})\,.
\end{equation}

Let us now turn to the symmetries of  (\ref{ODE}). The general
evolutionary  vector field on $M$ has the form
\begin{equation}\label{W}
    W=\int dt w^i(t,x,\dot x,\ldots,\stackrel{_{(m)}}{x})\frac{\delta}{\delta
    x^i(t)}\,.
\end{equation}
Being  interested in symmetries, we can exclude all the
derivatives from the integrand of (\ref{W}) with the help of the
equations of motion. This reduces the ansatz (\ref{W}) to
\begin{equation}\label{}
    W=\int dt w^i(t,x)\frac{\delta}{\delta x^i(t)}\,,
\end{equation}
where $w=w^i(t,x)\partial/\partial x^i$ is a vertical vector field
on $Y\times \mathbb{R}$. Verifying (\ref{sym}) one can find that
$W$ is a symmetry iff
\begin{equation}\label{vw}
    \partial_t w=[v,w]\,.
\end{equation}
Denoting by $\frak X_v(Y\times \mathbb{R})$ the space of all
vertical vector fields satisfying (\ref{vw}), we can write
\begin{equation}\label{}
    \mathrm{Sym}(\dot x+v)=\frak X_v(Y\times \mathbb{R})\,.
\end{equation}
Let us suppose that the vector field $v$ is complete. Then,
considering (\ref{vf}) and (\ref{vw}) as systems of the
first-order ODEs for $f$ and $w$, respectively, we conclude that
the spaces of conservation laws and symmetries are isomorphic to
the spaces of initial data $C^{\infty}(Y)$ and $\frak{X}(Y)$. Thus
the space of symmetries  appears to be  ``much larger'' than the
spaces of characteristics and conservation laws whenever $\dim
Y>1$.

\vspace{4mm}

The form of equations  (\ref{sym}) and (\ref{id'}) shows a certain
duality between the concepts of symmetry and conservation law. The
dual nature of these concepts is also supported by  the celebrated
Noether theorem for Lagrangian equations of motion. The long time
use of this theorem has lead to a widespread belief that any
conservation law or, what is the same, characteristic is a
reflection of some symmetry. Beyond the scope of Lagrangian
dynamics, this is not always true. Not any conservation law comes
from some symmetry, nor does each symmetry correspond to a
conservation law. There is no canonical relationship  between
these two concepts in general. This fact has a simple geometric
explanation: The characteristics $\Psi$'s and the symmetries $X$'s
belong to different vector bundles unless $\mathcal{E}^\ast=TM$.
This suggests that (i) any systematic procedure for converting
symmetries to characteristics and vice versa  has to involve an
additional geometric structure on $M$ and (ii) whatever this
structure may be, it will result in a linear relation between the
dual of the dynamics bundle $\mathcal{E}^\ast$ and the tangent
bundle of $M$. It is then quite reasonable to start with some
homomorphism $V: \mathcal{E}^\ast\rightarrow TM$ and to look for a
suitable set of conditions ensuring  that $V(\Psi)$ is a symmetry
whenever $\Psi$ is a characteristic. Actually, such a homomorphism
$V$ has been already identified as a key structure needed for the
path-integral quantization of non-Lagrangian gauge theories
\cite{KLS}. This homomorphism is called a \textit{Lagrange
anchor}.

\section{The Lagrange anchor}
\begin{defn}\label{def1}
A vector bundle homomorphism $V: \mathcal{E}^\ast\rightarrow TM$
is called a Lagrange anchor, if the diagram
\begin{equation}\label{DV}
    \xymatrix{\Gamma(TM)\ar[r]^J&\Gamma(\mathcal{E})\\
\Gamma(\mathcal{E}^\ast)\ar[r]^{J^\ast} \ar[u]_{V}&\Gamma(T^\ast
M)\ar[u]_{V^\ast}
    }
\end{equation}
commutes upon restriction to the shell, i.e.,
\begin{equation}\label{JV}
    J\circ V\approx V^\ast\circ J^\ast\,.
\end{equation}
\end{defn}
Since the last relation is linear in $V$, the Lagrange anchors
form a vector space, which we denote by  $\mathrm{An}'(T)$. This
space is clearly nonempty as each on-shell vanishing homomorphism
$V$ obeys (\ref{JV}). The on-shell vanishing Lagrange anchors are
of no significance from the perspective of converting conservation
laws to rigid symmetries and we rule them out by passing to the
quotient $\mathrm{An}(T)=\mathrm{An}'(T)/\mathrm{An}_0(T)$, where
$\mathrm{An}_0(T)$ is the subspace of all on-shell vanishing
Lagrange anchors.
\begin{prop}\label{prop1}
The Lagrange anchor takes identities to symmetries, trivial
identities to trivial symmetries  and the Noether identities to
the gauge symmetries.
\end{prop}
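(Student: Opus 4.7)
\noindent The plan is to prove the three implications separately: (a) identities map to symmetries, (b) trivial identities map to trivial symmetries, and (c) Noether identities map to gauge symmetries. In every case the argument will combine the defining property~(\ref{JV}) of the Lagrange anchor with the regularity hypothesis~(\ref{RC}) on the equations of motion.

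For (a), I will take an identity $\Psi\in\Gamma(\mathcal{E}^\ast)$, use its differential consequence $J^\ast(\Psi)\approx 0$ recorded in~(\ref{id'}), and deduce $J(V(\Psi))\approx 0$, which is exactly the symmetry condition~(\ref{sym}) for the evolutionary vector field $V(\Psi)$. The anchor condition~(\ref{JV}) already supplies the weak equality $J(V(\Psi))\approx V^\ast(J^\ast(\Psi))$, so everything reduces to showing $V^\ast(J^\ast(\Psi))\approx 0$. Here the regularity assumption enters: by~(\ref{RC}), any on-shell vanishing section can be written as a homomorphism applied to $T$, so $J^\ast(\Psi)=H(T)$ for some $H\in\mathrm{Hom}(\mathcal{E},T^\ast M)$, whence $V^\ast(J^\ast(\Psi))=(V^\ast\circ H)(T)$ is itself a homomorphism applied to $T$ and hence vanishes on the shell.

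For (b), a trivial identity is on-shell vanishing by definition, so the same invocation of~(\ref{RC}) gives $\Psi=K(T)$ for some skew-symmetric $K$, and then $V(\Psi)=(V\circ K)(T)\approx 0$, which is exactly a trivial symmetry. For (c), if $\Psi=Z(\xi)$ is a Noether identity associated with the generator $Z\in\mathrm{Hom}(\mathcal{G},\mathcal{E}^\ast)$ satisfying $\mathrm{Im}\,Z\subset\mathrm{Id}(T)$, then part~(a) applied to every $\xi'\in\Gamma(\mathcal{G})$ shows that the composite $V\circ Z:\mathcal{G}\to TM$ has its image inside $\mathrm{Sym}(T)$; consequently $V(\Psi)=(V\circ Z)(\xi)$ matches the definition of a gauge symmetry generated by $R:=V\circ Z$ with parameter $\varepsilon:=\xi$ and gauge algebra bundle $\mathcal{F}:=\mathcal{G}$.

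The only genuinely delicate step is the one in~(a) where the homomorphism $V^\ast$ is pushed past an on-shell vanishing section; this is purely a ``regularity versus weak equality'' bookkeeping point, cleanly resolved by the representation~(\ref{RC}) together with the observation that a vector bundle homomorphism applied to $T$ automatically vanishes on $\Sigma$. Parts~(b) and~(c) are then essentially syntactic consequences of~(a) and the definitions of trivial identity, Noether identity and gauge symmetry given in the previous section.
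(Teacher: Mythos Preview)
Your argument is correct and follows the same route as the paper. The paper's proof is terser---it simply writes $J\circ V(\Psi)\approx V^\ast\circ J^\ast(\Psi)\approx 0$ without spelling out the regularity step, and for trivial identities just remarks that $V(\Psi)\approx 0$ whenever $\Psi\approx 0$---but the underlying reasoning is identical to what you wrote. One minor addendum: after concluding that $V\circ Z(\xi)$ is a gauge symmetry, the paper observes that the overcompleteness of the fixed gauge generators $R$ yields a homomorphism $W:\mathcal{G}\to\mathcal{F}$ with $V\circ Z\approx R\circ W$; this is not needed for the proposition itself but sets up the on-shell commutative cochain map used immediately afterwards.
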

\begin{proof}
If $\Psi$ is an identity, then $J^\ast(\Psi)\approx 0$ and
\begin{equation}\label{}
J\circ V(\Psi)\approx V^\ast\circ J^\ast(\Psi)\approx 0\,.
\end{equation}
Hence $X=V(\Psi)$ is a symmetry. It is also clear that $X\approx
0$ whenever $\Psi\approx 0 $. If now $\Psi=Z(\xi)$ for some
$\xi\in \Gamma(\mathcal{G})$, then
\begin{equation}\label{}
    J\circ V(\Psi)\approx V^\ast\circ J^\ast\circ Z(\xi)\approx
    0\qquad \forall \xi \in \Gamma(\mathcal{G})\,.
\end{equation}
Thus, $X=V\circ Z(\xi)$ is a gauge symmetry. Since the generators
$R$ are assumed to form an (over)complete basis in the gauge
algebra bundle, there exists a homomorphism $W:
\mathcal{G}\rightarrow \mathcal{F}$ such that the diagram
\begin{equation}\label{DW}
    \xymatrix{\Gamma(\mathcal{F})\ar[r]^R&\Gamma(TM)\\
\Gamma(\mathcal{G})\ar[r]^{Z}
\ar[u]_W&\Gamma(\mathcal{E}^\ast)\ar[u]_V
    }
\end{equation}
commutes upon restriction to the shell, i.e., $V\circ Z\approx
R\circ W$.
\end{proof}
Let us  combine all the diagrams (\ref{4t-seq}), (\ref{4t-seq*}),
(\ref{DV}), (\ref{DW}) into the following one:
\begin{equation}\label{}
    \xymatrix {   0 \ar[r]   & \Gamma(\mathcal{F}) \ar[r]^{R}&\Gamma(TM) \ar[r]^{J}& \Gamma(\mathcal{E})\ar[r]^{Z^\ast}&
    \Gamma(\mathcal{G}^\ast)\ar[r]&0
    \\
   0 \ar[r]   & \Gamma(\mathcal{G})\ar[u]_{W} \ar[r]^{Z}&\Gamma(\mathcal{E}^\ast)\ar[u]_{V} \ar[r]^{J^\ast}&\Gamma
   ( T^\ast M)\ar[u]_{V^\ast}\ar[r]^{R^\ast}& \Gamma(\mathcal{F}^\ast)\ar[u]_{W^\ast}\ar[r]&0
   }
\end{equation}
From the previous section we know that the rows of this diagram
make cochain complexes upon restriction to the  shell. Then, the
on-shell commutativity of the squares implies that the upward
arrows define a cochine map. It is the standard fact of
homological algebra that each cochain map induces a well defined
homomorphism in cohomology. In the case at hand this gives the
homomorphism
\begin{equation}\label{HV}
    H(V): \mathrm{Char}(T)\rightarrow \mathrm{RSym}(T)
\end{equation}
from the space of characteristics to the space of rigid
symmetries.

A natural question arises about identifying the different anchors
from $\mathrm{An}(T)$  which induce the same homomorphism
(\ref{HV}). An appropriate algebraic concept for examining   the
question is that of homotopy. We say that two anchors $V,
\tilde{V}\in \mathrm{An}(T)$ are  equivalent, $V\sim \tilde{V}$,
if the corresponding cochain maps are homotopic. The latter
implies the existence of homomorphisms $G$ and $K$ such that
\begin{equation}\label{V-V}
\begin{array}{ll}
   \tilde{V}-V\approx G\circ J^\ast+R\circ K\,,&\qquad \tilde{V}{}^\ast-V^\ast\approx J\circ G+K^\ast\circ R^\ast\,,
  \\[3mm]
 \tilde{W}-W\approx K\circ Z\,,&\qquad \tilde{W}{}^\ast-W^\ast\approx Z^\ast\circ
 K^\ast\,.
  \end{array}
\end{equation}
All the relevant maps  are depicted in the following diagram:
\begin{equation}\label{}
    \xymatrix {   0 \ar[r]   & \Gamma(\mathcal{F}) \ar[r]^{R}&\Gamma(TM) \ar[r]^{J}& \Gamma(\mathcal{E})\ar[r]^{Z^\ast}&
    \Gamma(\mathcal{G})\ar[r]&0
    \\
   0 \ar[r]   & \Gamma(\mathcal{G}^\ast)\ar[ul] \ar@<0.5ex>[u]^{\tilde{W}}\ar@<-0.5ex>[u]_{W} \ar[r]^{Z}&\Gamma(\mathcal{E}^\ast)\ar[ul]|{K}
   \ar@<0.5ex>[u]^{\tilde{V}}\ar@<-0.5ex>[u]_{V}
    \ar[r]^{J^\ast}&\Gamma
   ( T^\ast M)\ar@{.>}[ull]\ar@{->}[ul]|{G}\ar@<0.5ex>[u]^{\tilde{V}{}^\ast}\ar@<-0.5ex>[u]_{V^\ast}\ar[r]^{R^\ast}&
   \Gamma(\mathcal{F}^\ast)\ar@{.>}[ull]\ar[ul]|{K^\ast}\ar@<0.5ex>[u]^{\tilde{W}{}^\ast}\ar@<-0.5ex>[u]_{W^\ast}\ar[r]&0\ar[ul]
   }
\end{equation}
The weak equalities  in the first line of
(\ref{V-V})\footnote{These equivalence relations for the Lagrange
anchors originally appeared in  \cite{LS2}. As usual, the homotopy
equivalence classes of cochain  transformations can be
equivalently reinterpreted as a cohomology  classes of
corresponding cochain complex \cite{KLS}. In the context of local
field theory  the equivalence classes of Lagrange anchors were
recently discussed in  \cite{BG}.} imply that
\begin{equation}\label{JG}
    J\circ (G-G^\ast)\approx 0\,.
\end{equation}
This is the only condition on the homomorphisms $G$ and $K$.
Taking into account the fact of completeness of the gauge algebra
generators, we can rewrite (\ref{JG}) in the following equivalent
form:
\begin{equation}\label{}
    G-G^\ast\approx R\circ U
\end{equation}
for some $U: T^\ast M \rightarrow \mathcal{F}$ satisfying
\begin{equation}\label{}
    R\circ U+U^\ast\circ R^\ast\approx 0\,.
\end{equation}
In the diagram above, the homomorphisms $U$ and $-U^\ast$ are
depicted  by the dotted arrows, so that the corresponding
parallelogram of maps appears to be on-shell commutative. Since
the homotopic cochain maps are known to induce the same
homomorphism in cohomology, we conclude that $H(V)=H(\tilde{V})$
whenever the anchors $V$ and $\tilde{V}$ are equivalent.

Denote by $\mathrm{HAn}(T)$ the space of the homotopy classes of
Lagrange anchors. This can be thought of as the quotient of
$\mathrm{An}'(T)$ by the Lagrange anchors of the form
\begin{equation}\label{Vtriv}
V_{\mathrm{triv}}=R\circ K+G\circ J^\ast +V_0\,,
\end{equation}
 where $V_0\approx
0$ and $G\approx G^\ast$. We have
\begin{equation}\label{}
    V_{\mathrm{triv}}(\Psi)=R(\varepsilon)-X_0\,,
\end{equation}
where $\varepsilon=K(\Psi)$ and $X_0=G\circ
J^\ast(\Psi)+V_0(\Psi)\approx 0$. In other words, the trivial
Lagrange anchor $V_{\mathrm{triv}}$ takes each identity (trivial
or not) to a trivial rigid symmetry.

\vspace{4mm}
 \noindent \textbf{Example} (continuation). Consider the dynamical
 system  (\ref{ODE}).
Then the most general local ansatz  for the Lagrange anchor $V:
T^\ast M\rightarrow TM$ looks like
\begin{equation}\label{Vep}
    \langle V(W),P\rangle\dot =\langle W,V^\ast(P)\rangle =\sum_{n=0}^N\int dt w_i(t)\alpha^{ij}_n(t, x,\dot x,\ldots,
    \stackrel{_{(m)}}{x})\stackrel{_{(n)}}{p}_j(t)\,,
\end{equation}
where $W= w_i\delta x^i$ and $P=p_i\delta x^i$ are arbitrary
1-forms on $M$. The homomorphism $J: TM\rightarrow TM$ is given by
the matrix first-order differential operator
\begin{equation}\label{}
    J=\left(\delta^i_j\frac d{dt}+\frac{\partial v^i}{\partial x^j}\right)\delta(t-t')
\end{equation}
The peculiar form of the operator $J$ allows one to absorb all the
derivatives of $p$'s into the trivial anchor (\ref{Vtriv}) for an
appropriate $G$. Using then the equations of motion or, what is
the same, adding an appropriate on-shell vanishing anchor $V_0$,
one can also exclude all the derivatives of $x$'s from the
integrand of (\ref{Vep}). We are led to conclude that each
equivalence class of the Lagrange anchors for equations
(\ref{ODE}) contains the anchor of the form
\begin{equation}\label{Vw}
V(\delta x^i(t))= \alpha^{ij}(t,x)\frac{\delta}{\delta x^j}\,.
\end{equation}
Substituting the last expression to the defining condition
(\ref{JV}), we finally get
\begin{equation}\label{alpha}
    \alpha^{ij}=-\alpha^{ji}\,,\qquad
    \partial_t\alpha=L_v\alpha\,,
\end{equation}
Thus, there is a one-to-one correspondence between the space of
non-equivalent Lagrange anchors $\mathrm{An}(\dot x+v)$ and the
vertical bivector fields
$\alpha=\alpha^{ij}(t,x)\partial_i\wedge\partial_j$ on $Y\times
\mathbb{R}$ satisfying (\ref{alpha}). If the vector field $v$ is
complete, then equation (\ref{alpha}) defines an isomorphism
$\mathrm{An}(\dot x+ v)\simeq \Gamma(\wedge^2 TY)$. This
classification result is in agreement with the conclusions of
paper \cite{BG}. In \cite{BG}, it is proven that any Lagrange
anchor for the AKSZ-type sigma models \cite{AKSZ} always reduces
to a purely algebraic operator whose coefficients do not dependent
on the space-time derivatives of fields.

\section{Proper symmetries and proper deformations}

We begin with an alternative geometric interpretation of the
Lagrange anchor as a ``Lie algebroid with relaxed integrability
condition''. It was first introduced in \cite{KLS}, and termed the
\textit{Lagrange structure}. To fix ideas,  let us think for a
while of $\mathcal{E}\rightarrow M$ as if it were an ordinary
vector bundle over a finite dimensional manifold \footnote{See
comments at the end of Sec. 2.}. The way of  adapting  this
construction to the context of local field theory will be obvious.

\begin{defn}\label{LS} Given a classical system $(\mathcal{E},T)$, a \textit{{Lagrange structure}} is a
homomorphism  $\mathrm{d}_\mathcal{E}: \Gamma(\wedge^\bullet
\mathcal{E})\rightarrow \Gamma(\wedge^{\bullet+1}\mathcal{E})$
obeying two conditions:
\begin{enumerate}
      \item [(i)] \,  $\mathrm{d}_\mathcal{E}$ is a derivation of degree
      1, i.e.,
$$
 \mathrm{d}_{\mathcal{E}}(A\wedge B)=\mathrm{d}_\mathcal{E}A\wedge B +
(-1)^{p}A\wedge \mathrm{d}_\mathcal{E} B\,,
$$
for any $A\in \Gamma(\wedge^p \mathcal{E})$ and $ B\in
\Gamma(\wedge^\bullet \mathcal{E})$;
 \item [(ii)] \,
$\mathrm{d}_\mathcal{E} T=0$\,.
\end{enumerate}
Here we identify $\Gamma(\wedge^0\mathcal{E})$ with $C^\infty(M)$.
\end{defn}

\vspace{5mm} Due to the Leibnitz rule (i), in each trivializing
chart ${U}\subset M$ the operator $\mathrm{d}_\mathcal{E}$ is
completely determined  by its action on the coordinate functions
$\varphi^i$ and the frame sections $\{e^a\}$ of
$\mathcal{E}|_{U}$:
\begin{equation}\label{dE}
    \mathrm{d}_\mathcal{E} \varphi^i=V^i_a(\varphi) e^a\,,\qquad \mathrm{d}_\mathcal{E} e^a=\frac12C_{bc}^a(\varphi)e^b\wedge
    e^c\,.
\end{equation}
Applying $\mathrm{d}_\mathcal{E}$ to the section $T=T_ae^a$, one
can see that the property (ii) is equivalent to the structure
relations
\begin{equation}\label{Brel}
\mathrm{d}_\mathcal{E} T =\frac12(V^i_a\partial_i T_b
-V^i_b\partial_i T_a +C_{ab}^cT_c )e^a\wedge e^b=0\,.
\end{equation}
Clearly, the derivation $\mathrm{d}_\mathcal{E}$ defines a bundle
homomorphism $V: \mathcal{E}^\ast\rightarrow TM$. The shell
$\Sigma$ being a regularly imbedded submanifold, equation
(\ref{Brel}) is equivalent to the defining condition  (\ref{JV}).
Thus, $V$ is a Lagrange anchor.

Dualizing relations (\ref{dE}) one gets a bracket
\begin{equation}\label{LB}
[\,\cdot\,,\,\cdot\,]: \Gamma(\mathcal{E}^\ast)\wedge
\Gamma(\mathcal{E}^\ast)\rightarrow \Gamma(\mathcal{E}^\ast)\,.
\end{equation}
On frame sections $e_a$ of $\mathcal{E}^\ast|_U$ it reads
\begin{equation}
    [e_a,e_b]=C_{ab}^d e_d
\end{equation}
and extends to arbitrary  $ \Psi, \Psi'\in
\Gamma(\mathcal{E}^\ast)$ by the Leibnitz rule
\begin{equation}\label{LB1}
    [f\Psi,\Psi']=f[\Psi,\Psi']+(V(\Psi)\cdot f)\Psi'\,,\qquad \forall
    f\in C^\infty(M)\,.
\end{equation}
Generally  the bracket violates the Jacobi identity, therefore  it
is not a Lie-algebra bracket.

\begin{defn}\label{ILS}
A Lagrange structure $(\mathcal{E},T,\mathrm{d}_\mathcal{E})$ is
said to be integrable if $\mathrm{d}_\mathcal{E}^2=0$.
\end{defn}

Comparing Definitions \ref{LS} and \ref{ILS} with the definition
of a \textit{Lie algebroid} (see e.g. \cite{Mac}, \cite{CdSW}),
one can see that  the integrable Lagrange structure is nothing
else but the Lie algebroid over $M$ with anchor
$V:\mathcal{E}^\ast\rightarrow TM$, Lie bracket (\ref{LB}), and a
distinguished section $T$. This particular yet important case
accounts for the name of $V$ -- ``Lagrange anchor''. For an
integrable Lagrange structure the bracket (\ref{LB}) satisfies the
Jacobi identity and the anchor distribution $\mathrm{Im}V\subset
TM$ is integrable.

\vspace{3mm}
To illustrate (as well as motivate) the definitions above let us
consider a Lagrangian theory with action $S(\varphi)$. The
equations of motion read
\begin{equation}\label{dS}
   T\equiv \mathrm{d}S=0\,,
\end{equation}
so that the dynamics bundle $\mathcal{E}$ is given by the
cotangent bundle $T^\ast M$ of the space of all histories. The
canonical Lagrange structure is given by the exterior differential
$\mathrm{d}: \Gamma(\wedge^\bullet T^\ast M)\rightarrow
\Gamma(\wedge^{\bullet+1}T^\ast M)$ and the defining condition
(\ref{Brel}) takes the form
\begin{equation}\label{ddS}
    \mathrm{d}T=\mathrm{d}{}^2S=0\,.
\end{equation}
The identity $\mathrm{d}^2=0$ means integrability. The Lagrange
anchor is given by the identical map $ V=\mathrm{id}:
TM\rightarrow TM$ and the Lie bracket (\ref{LB}) coincides with
the commutator of vector fields.

Definitions  \ref{LS} and \ref{ILS}   suggest to view the Lagrange
structure as a ``Lie algebroid with relaxed integrability
condition'' in the sense that the ``strong'' integrability
condition $\mathrm{d}_{\mathcal{E}}^2=0$ is replaced here by the
weaker one $\mathrm{d}_{\mathcal{E}}^2T=0$.  In spite of weakened
integrability, it is still possible to utilize many of the usual
differential-geometric constructions associated to Lie algebroids.
In particular, we can endow the exterior algebra of
$\mathcal{E}$-differential forms $\Gamma(\wedge^\bullet
\mathcal{E})$ with the operations of inner differentiation and Lie
derivative.

 By definition, the inner differential $i_\Psi$
associated with a section $\Psi\in \Gamma(\mathcal{E}^\ast)$ is a
differentiation of $\Gamma(\wedge^\bullet \mathcal{E})$ of degree
-1 which action on $A \in \Gamma(\mathcal{E})$ is given by
\begin{equation}\label{}
    i_\Psi A =\langle\Psi,A\rangle\,.
\end{equation}
Following the analogy with Cartan's calculus, the Lie derivative
along $\Psi$ is defined now by
\begin{equation}\label{}
    L_\Psi =i_\Psi\circ \mathrm{d}_{{\mathcal{E}}}+\mathrm{d}_{{\mathcal{E}}}\circ
    i_\Psi\,.
\end{equation}
It is a differentiation of $\Gamma(\wedge^\bullet \mathcal{E})$ of
degree 0. One can easily verify  the following identities:
\begin{equation}\label{Rels}
\begin{array}{ll}
    \{i_\Psi, i_{\Psi'}\}=0\,,\qquad&
    [L_\Psi,\mathrm{d}_\mathcal{E}]=[i_\Psi, \mathrm{d}{}^2_\mathcal{E}]\,,\\[5mm]
[L_\Psi, i_{\Psi'}]=i_{[\Psi,\Psi']}\,,\quad\qquad & [L_{\Psi},
L_{\Psi'}]=L_{[\Psi,\Psi']}+\{i_{\Psi'},[i_\Psi,
\mathrm{d}_{\mathcal{E}}^2]\}\,,
\end{array}
\end{equation}
where the braces stand for anticommutators. We can also define the
action of the Lie derivative on the contravariant
$\mathcal{E}$-tensors by setting
\begin{equation}\label{}
    L_\Psi \Psi' = [\Psi,\Psi'] \qquad \forall \Psi,\Psi' \in
    \Gamma(\mathcal{E}^\ast)\,.
\end{equation}
This definition extends to arbitrary $\mathcal{E}$-tensor fields
by the usual formulas of differential geometry. From this point on
we return to infinite-dimensional setting of Sec. 2.

\begin{defn}\label{prosym}
Given a Lagrange structure $(\mathcal{E}, T,
\mathrm{d}_{\mathcal{E}})$ and a section $\Psi\in
\Gamma(\mathcal{E}^\ast)$, the vector field $V(\Psi)$ is said to
be a proper symmetry generated by $\Psi$ if $L_\Psi T=0$.
\end{defn}

Notice that the proper symmetry $V(\Psi)$ is a symmetry in the
usual sense. Indeed, $L_{\Psi}T\approx J\circ V(\Psi)$ and $L_\Psi
T=0$ implies $V(\Psi)\in \mathrm{Sym}'(T)$. For variational
equations of motion equipped with the canonical Lagrange structure
(\ref{ddS}) the proper symmetries coincide with the symmetries of
action.

We emphasize that the property of being proper symmetry is not
homotopy invariant: If $V(\Psi)$ is a proper symmetry and $V\sim
V'$, then $V'(\Psi)$ may not be a symmetry at all,  unless
$\Psi\in \mathrm{Id}'(T)$.

The image $\mathrm{Im} V\subset TM$ of the Lagrange anchor defines
a generalized distribution on $M$ in the sense of Sussmann
\cite{Su}. Recall that  a generalized distribution $\mathcal{V}$
is said to be \textit{involutive}, if it is closed under the Lie
bracket, that is $[\mathcal{V},\mathcal{V}]\subset \mathcal{V}$.
The \textit{involutive closure} of $\mathcal{V}$ is, by
definition, the minimal involutive distribution
$\overline{\mathcal{V}}$ containing $\mathcal{V}$ as a
subdistribution. Clearly, $\overline{\mathcal{V}}$ is spanned by
the iterated commutators of $\mathcal{V}$ and the equality
$\mathcal{V}=\overline{\mathcal{V}}$ amounts to involutivity of
$\mathcal{V}$.

\begin{defn}
A Lagrange anchor $V: \mathcal{E}^\ast\rightarrow TM$ is said to
be transitive if $\mathrm{Im}V = TM$  and weakly transitive if
$\overline{\mathrm{Im} V} = TM$.
\end{defn}

For example, the canonical Lagrange anchor $V=\mathrm{id}:
TM\rightarrow TM$ for Lagrangian equations of motion
$T=\mathrm{d}S=0$ is both integrable and transitive.

\begin{prop}\label{prop2}
The Lagrange anchor $V: \mathcal{E}^\ast\rightarrow TM$ takes
identities to proper symmetries. If the Lagrange anchor is weakly
transitive, then each proper symmetry comes from some identity.
\end{prop}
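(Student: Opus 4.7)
The plan is to attack both directions of the proposition using Cartan's formula $L_\Psi = i_\Psi \circ \mathrm{d}_{\mathcal{E}} + \mathrm{d}_{\mathcal{E}} \circ i_\Psi$ together with condition (ii) of Definition \ref{LS}, which says $\mathrm{d}_{\mathcal{E}} T = 0$. For a section $\Psi \in \Gamma(\mathcal{E}^\ast)$ the quantity $i_\Psi T = \langle \Psi, T\rangle$ is a local functional, and Cartan's formula collapses to
\begin{equation}
L_\Psi T = \mathrm{d}_{\mathcal{E}}\langle\Psi,T\rangle\,.
\end{equation}
So the whole proof reduces to analyzing when $\mathrm{d}_{\mathcal{E}} f = 0$ for a local functional $f$.

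The key bridge is that on $C^\infty(M) = \Gamma(\wedge^0\mathcal{E})$, the derivation $\mathrm{d}_{\mathcal{E}}$ is computed, via the Leibnitz rule and the local formula $\mathrm{d}_{\mathcal{E}}\varphi^i = V^i_a e^a$, as $\mathrm{d}_{\mathcal{E}} f = V^i_a (\delta f/\delta\varphi^i) e^a$. Pairing with an arbitrary $\Psi'\in\Gamma(\mathcal{E}^\ast)$ and applying Cartan's formula to the 0-form $f$ gives $\langle\Psi',\mathrm{d}_\mathcal{E} f\rangle = L_{\Psi'} f = V(\Psi')\cdot f$. Hence $\mathrm{d}_{\mathcal{E}} f = 0$ is equivalent to the statement that every vector field in $\mathrm{Im}\,V$ annihilates $f$.

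For the forward direction, suppose $\Psi$ is an identity, so $\langle \Psi, T\rangle \doteq 0$ by (\ref{id}). By the variational criterion stated in Section 2, this means $\delta\langle\Psi,T\rangle/\delta\varphi^i = 0$, whence $\mathrm{d}_{\mathcal{E}}\langle\Psi,T\rangle = 0$, and therefore $L_\Psi T = 0$. Thus $V(\Psi)$ is a proper symmetry by Definition \ref{prosym}.

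For the converse, assume $V$ is weakly transitive and $V(\Psi)$ is a proper symmetry, i.e.\ $L_\Psi T = 0$, which again by Cartan's formula gives $\mathrm{d}_{\mathcal{E}} f = 0$ for $f = \langle\Psi,T\rangle$. By the observation above, $X\cdot f = 0$ for every $X \in \Gamma(\mathrm{Im}\,V)$. An elementary induction on bracket depth shows that if two vector fields annihilate $f$ then so does their Lie bracket, so every $X \in \Gamma(\overline{\mathrm{Im}\,V})$ annihilates $f$. The weak transitivity hypothesis $\overline{\mathrm{Im}\,V} = TM$ then forces $X\cdot f = 0$ for all evolutionary vector fields, hence $\delta f/\delta\varphi^i = 0$, and by the variational criterion $\langle\Psi,T\rangle \doteq 0$, so $\Psi \in \mathrm{Id}'(T)$. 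The only delicate step is the passage from the pointwise vanishing $V(\Psi')\cdot f = 0$ for all $\Psi'$ to the involutive closure; this is where the algebraic weak transitivity has to be used in its global, distributional form rather than pointwise, and it is the main structural ingredient one must invoke carefully.
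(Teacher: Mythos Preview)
Your proof is correct and follows essentially the same route as the paper's: both arguments use Cartan's formula together with $\mathrm{d}_{\mathcal{E}}T=0$ to reduce everything to the equation $\mathrm{d}_{\mathcal{E}}(i_\Psi T)=0$, and then pass from annihilation by $\mathrm{Im}\,V$ to annihilation by its involutive closure to conclude $\mathrm{d}(i_\Psi T)=0$ in the weakly transitive case. You are simply more explicit than the paper about the bridge $\langle\Psi',\mathrm{d}_{\mathcal{E}}f\rangle=V(\Psi')\cdot f$ and the bracket-depth induction, which the paper leaves implicit.
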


\begin{proof}
The first statement follows from  the identity
\begin{equation}\label{}
    L_{\Psi}T=\mathrm{d}_{\mathrm{\mathcal{E}}}(i_\Psi
    T)+i_{\Psi}(\mathrm{d}_{\mathcal{E}}T)=0 \qquad \forall \Psi\in
    \mathrm{Id}'(T)\,.
\end{equation}
This identity also implies that
\begin{equation}\label{diT}
    \mathrm{d}_{\mathcal{E}}(i_\Psi T)=0
\end{equation}
for any generator of proper symmetry $\Psi$. If the function
$i_\Psi T$ is annihilated by the anchor distribution $\mathrm{Im}
V$, then it has to be  annihilated by the involutive closure
$\overline{\mathrm{Im}{V}}$. Since for  transitive Lagrange
anchors $\overline{\mathrm{Im}{V}}=TM$, we have
$\mathrm{d}(i_{\Psi}T)=0$. Thus, $\langle\Psi, T\rangle\doteq 0$
and $\Psi\in \mathrm{Id}'(T)$.
\end{proof}

Let $\mathrm{PSym}'_V(T)\subset \mathrm{Sym}(T)$ denote the
subspace of all proper symmetries and let $\mathrm{PSym}_V(T)$ be
the quotient space of all proper symmetries by the trivial and
gauge symmetries. According to this definition
$\mathrm{PSym}_V(T)$ is a subspace in $\mathrm{RSym}(T)$.
Combining Proposition \ref{prop2} with Proposition \ref{prop1}, we
arrive at the following statement.

\begin{prop}\label{WT}
The bundle map $V: \mathcal{E}^\ast\rightarrow TM $ induces a well
defined homomorphism
\begin{equation}\label{Vstar}
    H(V): \mathrm{Char}(T)\rightarrow \mathrm{PSym}_V(T)\,.
\end{equation}
If the equivalence class $H(V)$ contains a weakly transitive
Lagrange anchor, then the homomorphism (\ref{Vstar}) is
surjective.
\end{prop}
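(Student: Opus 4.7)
The plan is to deduce both parts of Proposition \ref{WT} as near-immediate consequences of Propositions \ref{prop1} and \ref{prop2}, with the only subtlety lying in how the $V$-dependence of $\mathrm{PSym}_V(T)$ interacts with the homotopy invariance of $H(V)$.

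First I would handle well-definedness. Proposition \ref{prop1} already yields a well-defined homomorphism $H(V):\mathrm{Char}(T)\to \mathrm{RSym}(T)$ as the induced map on cohomology of the cochain transformation built in Section 4. The first half of Proposition \ref{prop2} then asserts that $V$ sends each identity $\Psi\in \mathrm{Id}'(T)$ to a proper symmetry $V(\Psi)\in \mathrm{PSym}'_V(T)$, so the image of $H(V)$ is automatically contained in the subspace $\mathrm{PSym}_V(T)\subset \mathrm{RSym}(T)$. Combined with the fact (from Proposition \ref{prop1}) that trivial identities map to trivial symmetries and Noether identities map to gauge symmetries, the refined map is well defined on $\mathrm{Char}(T)$ with values in $\mathrm{PSym}_V(T)$, with no further calculation needed.

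For surjectivity I would pick any class $[X]\in \mathrm{PSym}_V(T)$ represented by a proper symmetry $X=V(\Psi)$ satisfying $L_\Psi T=0$. If $V$ itself is weakly transitive, the second half of Proposition \ref{prop2} forces $\Psi$ to be an identity, whence $[\Psi]\in\mathrm{Char}(T)$ satisfies $H(V)([\Psi])=[X]$. To cover the hypothesis of the proposition, I would then replace $V$ by a weakly transitive representative $V_0\sim V$ of its homotopy class, using the fact (established after (\ref{V-V})) that homotopic Lagrange anchors induce the same cohomology homomorphism, $H(V)=H(V_0)$, so the argument transfers.

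The hard part, I expect, is precisely this last step, because the notion of proper symmetry is \emph{not} homotopy invariant, as explicitly noted after Definition \ref{prosym}. Specifically, replacing $V$ by $V_0\sim V$ modifies $V(\Psi)$ by a term of the form $G(J^\ast\Psi)+R(K(\Psi))$ (plus on-shell vanishing pieces); when $\Psi\in \mathrm{Id}'(T)$ the first summand is on-shell zero because $J^\ast\Psi\approx 0$, and the second is manifestly a gauge symmetry, so the ambiguity is killed in the quotient defining $\mathrm{PSym}_V(T)$. Verifying this compatibility carefully is the crux of the argument; once it is in place, the surjectivity of $H(V)$ onto $\mathrm{PSym}_V(T)$ reduces cleanly to the weakly transitive case already handled by Proposition \ref{prop2}.
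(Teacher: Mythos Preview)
Your approach matches the paper's exactly: the paper offers no separate proof for this proposition, presenting it as obtained by ``combining Proposition~\ref{prop2} with Proposition~\ref{prop1}'', which is precisely what you do for both the well-definedness and the surjectivity under weak transitivity of $V$.

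Your additional discussion of the homotopy subtlety goes beyond what the paper provides, and it is worth flagging that the resolution you sketch does not quite close. The compatibility you verify---that $V(\Psi)-V_0(\Psi)$ is trivial-plus-gauge whenever $\Psi\in\mathrm{Id}'(T)$---only shows that $\mathrm{Im}\,H(V)=\mathrm{Im}\,H(V_0)$ as subsets of $\mathrm{RSym}(T)$, which was already known. To deduce surjectivity of $H(V)$ onto $\mathrm{PSym}_V(T)$ from surjectivity of $H(V_0)$ onto $\mathrm{PSym}_{V_0}(T)$, you would need the inclusion $\mathrm{PSym}_V(T)\subset\mathrm{PSym}_{V_0}(T)$: that every class represented by a $V$-proper symmetry $V(\Psi)$ with $L_\Psi T=0$ is also represented by some $V_0$-proper symmetry. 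Your argument assumes $\Psi$ is already an identity, which is exactly what needs to be shown. The paper does not address this point either; it appears to intend the surjectivity claim for a weakly transitive $V$ itself, in which case the argument from Proposition~\ref{prop2} is complete as you describe.
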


The symmetries of a classical system form the Lie algebra with
respect to the vector-field commutator. On the other hand, given a
Lagrange anchor, the dual of the dynamics bundle carries the
bracket operation (\ref{LB}). Having in mind the Lagrangian case,
where  characteristics coincide with the symmetries of action, one
may expect a certain relationship between the aforementioned
multiplicative structures in $\Gamma(TM)$ and
$\Gamma(\mathcal{E}^\ast)$.  This is detailed in the following
proposition.

\begin{prop}
If $V(\Psi)$ and $V(\Psi')$ are two proper symmetries, then
$V([\Psi,\Psi'])$ is also a proper symmetry. Furthermore, the
subspace of identities $\mathrm{Id}'(T)\subset
\Gamma(\mathcal{E}^\ast)$ is closed with respect to the  bracket
(\ref{LB}), the trivial and Noether identities forming an ideal in
$\mathrm{Id}'(T)$. If the Lagrange structure is integrable, then
$V:\Gamma(\mathcal{E}^\ast)\rightarrow \Gamma(TM)$ is a Lie
algebra homomorphism and the bracket (\ref{LB}) makes
$\mathrm{Id}'(T)$ and $\mathrm{Char}(T)$ into the Lie algebras. In
the latter case, we have a well defined Lie algebra homomorphism
(\ref{Vstar}).
\end{prop}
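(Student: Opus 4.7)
The proposition comprises four assertions---closure of proper symmetries under the vector-field commutator, closure of $\mathrm{Id}'(T)$ under the bracket (\ref{LB}), the ideal property of trivial and Noether identities, and the Lie-algebraic structure in the integrable case---and I would take them in that order, exploiting the graded Cartan identities (\ref{Rels}) throughout. For the first claim, I apply $[L_\Psi, L_{\Psi'}] = L_{[\Psi,\Psi']} + \{i_{\Psi'}, [i_\Psi, \mathrm{d}_{\mathcal{E}}^2]\}$ to $T$. The left-hand side vanishes by the proper-symmetry hypotheses $L_\Psi T = L_{\Psi'} T = 0$. For the correction term, $\mathrm{d}_\mathcal{E}T = 0$ forces $\mathrm{d}_\mathcal{E}^2 T = 0$, while Cartan's formula rewrites $L_\Psi T = 0$ as $\mathrm{d}_\mathcal{E}(i_\Psi T) = 0$ and hence gives $\mathrm{d}_\mathcal{E}^2(i_\Psi T) = 0$; the contractions $i_\Psi i_{\Psi'} T$ vanish because $T$ has degree one. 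Unwinding the anticommutator into these vanishing pieces yields $L_{[\Psi,\Psi']} T = 0$.

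For closure of $\mathrm{Id}'(T)$, I use $[L_\Psi, i_{\Psi'}] = i_{[\Psi, \Psi']}$ from (\ref{Rels}). Applied to $T$, together with $L_\Psi T = \mathrm{d}_\mathcal{E}(i_\Psi T)$ and $L_{\Psi'} f = V(\Psi') \cdot f$ on functions, it produces
\begin{equation*}
\langle[\Psi,\Psi'],T\rangle = V(\Psi)\cdot\langle\Psi',T\rangle - V(\Psi')\cdot\langle\Psi,T\rangle.
\end{equation*}
When $\Psi, \Psi' \in \mathrm{Id}'(T)$ the right-hand side is the action of an evolutionary vector field on a local functional equivalent to zero modulo boundary terms; by the Helmholtz criterion of Section~2 such functionals have vanishing Euler--Lagrange derivatives, so each directional derivative is identically zero. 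Hence $[\Psi, \Psi'] \in \mathrm{Id}'(T)$.

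Promoting closure to the ideal property will be the main obstacle. My plan is to exploit Leibniz (\ref{LB1}) together with regularity (\ref{RC}). For a trivial identity written in a local frame as $\Psi = \Psi^a e_a$ with $\Psi^a \approx 0$, Leibniz gives $[\Psi,\Psi'] = \Psi^a[e_a,\Psi'] + V(e_a)(\Psi^a)\,\Psi'$; the first sum is manifestly on-shell vanishing, while substituting $\Psi^a = K^{ab} T_b$ into the second term and invoking the compatibility $J \circ V \approx V^\ast \circ J^\ast$ lets one rewrite the residue as the sum of an on-shell vanishing section and a section in the image of $Z$, i.e. an element of $\mathrm{Id}_0(T) + \mathrm{NId}(T)$. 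For a Noether identity $\Psi = Z(\xi)$, the on-shell commutativity $V \circ Z \approx R \circ W$ from diagram (\ref{DW}) lets one chase the bracket back through $Z$, expressing $[Z(\xi),\Psi']$ as $Z(\xi')$ modulo a trivial term. Both arguments demand disciplined index bookkeeping and repeated use of (\ref{RC}) to repackage derivatives of $T$ into new $K$- and $Z$-coefficients.

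In the integrable case $\mathrm{d}_\mathcal{E}^2 = 0$ the correction in (\ref{Rels}) vanishes, so $L_{[\Psi, \Psi']} = [L_\Psi, L_{\Psi'}]$ exactly. Evaluating both sides on the coordinate functions $\varphi^i$ and using $L_\Psi \varphi^i = V(\Psi)^i$ gives $V([\Psi, \Psi']) = [V(\Psi), V(\Psi')]$ as evolutionary vector fields, so $V$ is a Lie algebra homomorphism. The Jacobi identity for (\ref{LB}) follows from dualizing $\mathrm{d}_\mathcal{E}^2 (e^a) = 0$ in the frame expansion (\ref{dE}). Combined with the preceding steps, this endows $\mathrm{Id}'(T)$ with a Lie algebra structure, the ideal property passes the bracket to the quotient $\mathrm{Char}(T)$, and $V$ descends to the Lie algebra homomorphism (\ref{Vstar}).
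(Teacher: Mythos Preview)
Your treatment of the first, second, and fourth assertions matches the paper's proof almost line for line: both of you invoke the third and fourth identities of (\ref{Rels}) applied to $T$, use $\mathrm{d}_{\mathcal{E}}T=0$ to reduce Cartan's formula, and in the integrable case read off the anchor homomorphism from $[L_\Psi,L_{\Psi'}]=L_{[\Psi,\Psi']}$ acting on functions. (One cosmetic point: in part~2 the directional derivatives are only $\doteq 0$, not literally zero, but that is all you need.)

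The divergence is in the ideal property, and here your route is genuinely harder and not fully justified. You place the trivial/Noether identity in the \emph{first} slot, expand it in a frame, and invoke the Leibniz rule (\ref{LB1}). This produces a term of the form $(V(e_a)\cdot\Psi^a)\,\Psi'$ (or, with the standard Leibniz sign, $-(V(\Psi')\cdot\Psi^a)\,e_a$), and you then appeal to $J\circ V\approx V^\ast\circ J^\ast$ and to diagram (\ref{DW}) to reduce it. But (\ref{DW}) lives in $TM$, not in $\mathcal{E}^\ast$, so it is unclear how it lets you ``chase the bracket back through $Z$''; and the claim that the residual term lands in $\mathrm{Id}_0(T)+\mathrm{NId}(T)$ is asserted rather than shown. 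The paper avoids all of this by the opposite choice of slots: it writes $[\Psi,\Psi']=L_\Psi\Psi'$ with $\Psi$ the \emph{arbitrary} identity and $\Psi'=K(T)+Z(\xi)$. Since $\Psi$ is an identity, Proposition~\ref{prop2} gives $L_\Psi T=0$ exactly, so
\[
L_\Psi\big(K(T)\big)=(L_\Psi K)(T)+K(L_\Psi T)=(L_\Psi K)(T)
\]
is trivially trivial, and $L_\Psi\big(Z(\xi)\big)=Z(V(\Psi)\xi)+(L_\Psi Z)(\xi)$. The first summand is manifestly Noether; the second is an identity for \emph{every} $\xi$, which by the very definition of Noether identity (a homomorphism $\mathcal{G}\to\mathcal{E}^\ast$ with image in $\mathrm{Id}(T)$) forces it to be Noether as well. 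No diagram chase, no index bookkeeping, no appeal to (\ref{RC}) or (\ref{DW}) is needed. The moral: exploit $L_\Psi T=0$ by putting the generic identity, not the special one, in the Lie-derivative slot.
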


\begin{proof}
Applying the fourth identity (\ref{Rels}) to $T$ and using
(\ref{diT}), we find that $L_{[\Psi,\Psi']}T=0 $. Hence
$[\Psi,\Psi']\in \Gamma(\mathcal{E}^\ast)$ is a proper symmetry
generator.

Applying now the third identity (\ref{Rels}) to $T$, we get
$i_{[\Psi,\Psi']}T=0$ provided that $\Psi, \Psi'\in
\mathrm{Id}'(T)$. So the bracket of two identities is an identity
again. Suppose for a moment  that $\Psi'$ is given by a sum of the
Noether and trivial identities, i.e., $\Psi'=Z(\xi)+K(T)$, for
some $K\in \Gamma(\mathcal{E}^\ast \wedge \mathcal{E}^\ast)$ and
$\xi \in \Gamma(\mathcal{G})$. We have
\begin{equation}\label{}
\begin{array}{lll}
    [\Psi,\Psi']&=&L_{\Psi} K(T)+L_\Psi
    Z(\xi)\\[3mm]
    &=&(L_{\Psi}K)(T)+ Z(V(\Psi)\xi) + (L_\Psi Z)(\xi) \,.
\end{array}
\end{equation}
By definition, the right hand side is an identity. Clearly, the
first term is a trivial identity and the second one is a Noether
identity. Then the third  term, being proportional to an arbitrary
section $\xi$, has to define a Noether identity as well. We thus
conclude that the trivial and Noether identities constitute an
ideal in $\mathrm{Id}'(T)$ with respect to the bracket
multiplication. So we have a well defined bracket on the quotient
space $\mathrm{Char}(T)$.

It follows from the fourth identity in (\ref{Rels}) that
\begin{equation}\label{LAH}
    [V(\Psi),V(\Psi')]\cdot f-V([\Psi,\Psi'])\cdot f =
    W(\Psi,\Psi')\cdot f \qquad \forall f\in C^{\infty}(M)\,,
\end{equation}
where
\begin{equation}\label{}
    W(\Psi,\Psi')\cdot f\equiv \langle
    \Psi\wedge\Psi',\mathrm{d}^2_{\mathcal{E}}f\rangle\,.
\end{equation}
By construction, the vector field $W(\Psi,\Psi')$ is a symmetry,
not necessary proper though.  It vanishes for integrable Lagrange
structures, when the bracket (\ref{LB}) enjoys the Jacobi
identity. In this case, relation  (\ref{LAH}) says that
$V:\Gamma(\mathcal{E}^\ast)\rightarrow \Gamma(TM)$ is a Lie
algebra homomorphism. This homomorphism takes the subalgebra
$\mathrm{Id}'(T)$ to the subalgebra $\mathrm{PSym}'_V(T)$. Passing
to the quotient, we get a well defined Lie algebra homomorphism
(\ref{Vstar}).
\end{proof}

\vspace{4mm}
 \noindent {\textbf{Example}} (continuation). Besides the
 homomorphism $V: T^\ast M\rightarrow TM$, the Lagrange anchor (\ref{Vw})
 defines a bracket (\ref{LB}) on the cotangent bundle $T^\ast M$. The
 latter is given by
\begin{equation}\label{brackODE}
    [\delta x^i(t),\delta
    x^j(t')]=\delta x^k(t)\partial_k\alpha^{ij}(t,x(t))\delta(t-t')\,.
\end{equation}
Formulas (\ref{Vw}), (\ref{alpha}) and (\ref{brackODE}) taken
together define the most general Lagrange structure associated to
the first-order differential equations (\ref{ODE}).

Clearly, the Lagrange anchor (\ref{Vw}) is transitive if for any
$t\in \mathbb{R}$ the bivector $\alpha_t\in \Gamma(\wedge^2 TY)$
is nondegenerate. In this case, each fiber $(Y, \alpha_t)$ of
$Y\times \mathbb{R}$ is an {almost symplectic manifold}. One can
also check \cite{KLS} that the integrability of the Lagrange
anchor (\ref{Vw}) amounts to the Jacobi identity
\begin{equation}\label{}
    \frac16 [\alpha,\alpha]\equiv \alpha^{im}\partial_m
    \alpha^{jk}\partial_i\wedge\partial_j\wedge\partial_k=0\,.
\end{equation}
Here the square brackets denote the Schouten commutator of
polyvector fields. In the integrable case, the bundle map
$\alpha_t: T^\ast Y\rightarrow TY$ defines the Lie algebroid of
the Poisson manifold $(Y,\alpha_t)$. If the Lagrange anchor
(\ref{Vw}) is both transitive and integrable, then $(Y, \alpha_t)$
is a symplectic manifold.

Suppose now that for any $t\in \mathbb{R}$ the involutive closure
of the vector distribution $\mathrm{Im}\alpha_t\subset TY$
coincides with the whole tangent bundle $TM$. Then the Lagrange
anchor (\ref{Vw}), (\ref{alpha}) is weakly transitive. The
condition for the 1-form (\ref{psi}) to be a proper symmetry of
(\ref{ODE}) gives
\begin{equation}\label{trans}
    i_X(\widetilde{d}\psi)=0\,,\qquad
i_X(\partial_t\psi+{\widetilde{d}}\psi(v))=0
\end{equation}
for any vertical vector field $X=\alpha(w)$, with $w$ being a
1-form. It follows from the first equation in (\ref{trans}) that
\begin{equation}\label{}
    L_X\widetilde{d}\psi=0\quad \Rightarrow\quad \widetilde{d}\psi=0\,,
\end{equation}
since the vector fields $X=\alpha_t(w)$ generate the whole Lie
algebra $\frak{X}(Y)$. If the first group of the De Rham
cohomology of $Y$ is trivial, then $\psi=\widetilde{d}f'$ for some
function $f'\in C^\infty(Y\times \mathbb{R})$ and the second
equation in (\ref{trans}) takes the form
\begin{equation}\label{}
    X\cdot (\partial_t f'+v\cdot f')=0\,.
\end{equation}
Again, the transitivity implies that  $\partial_t f'+v\cdot
f'=\dot g(t)$ for some $g\in C^\infty(\mathbb{R})$. Setting
$f=f'-g$, one can see that the 1-form $\psi=\widetilde{d}f$
defines a characteristic for (\ref{ODE}) in accordance with
(\ref{vf}). So, every proper symmetry for the transitive Lagrange
anchor (\ref{Vw}) comes from some characteristic. This is in
agreement  with Proposition \ref{WT}. In the case being considered
the homomorphism (\ref{Vstar}) is actually an isomorphism. Indeed,
if $X$ is the proper symmetry corresponding to a characteristic
$\Psi=\mathrm{d}\int f(t,x)dt$, then $X=0$ implies that
$\alpha(w)\cdot f=0$ for any 1-form $w$. In view of the weak
transitivity, $f$ has to be a function of $t$ alone. Then,
$\Psi=0$. We thus arrive at the following generalization of the
Noether theorem for the first-order ODEs:

\textit{Let $V$ be the weakly transitive Lagrange anchor
(\ref{Vw}) associated to the differential  equations (\ref{ODE})
and a vertical bivector field $\alpha$ on $Y\times \mathbb{R}$
with $H^1(Y)=0$, then }\begin{equation}\label{}
    \mathrm{Char}(\dot x+v)\simeq \mathrm{PSym}_{\alpha}(\dot
    x+v)\,.
\end{equation}

\vspace{4mm}
 Consider now an integrable Lagrange structure
$(\mathcal{E},T,\mathrm{d}_{\mathcal{E}})$. In the integrable
case, the dual of the dynamics bundle $\mathcal{E}^\ast\rightarrow
M$ carries the structure of Lie algebroid and the differential
$\mathrm{d}_{\mathcal{E}}: \Gamma(\wedge^\bullet
\mathcal{E})\rightarrow \Gamma(\wedge^{\bullet+1}\mathcal{E})$
makes the exterior algebra of $\mathcal{E}$-differential forms
into a cochain complex. Let $H^\bullet(\mathcal{E})=\mathrm{Ker}
\mathrm{d}_{\mathcal{E}}/\mathrm{Im} \mathrm{d}_{\mathcal{E}}$
denote the corresponding cohomology groups. Classical dynamics on
$M$ are specified by a $1$-cocycle $T$. If $T'$ is another
representative of the $\mathrm{d}_{\mathcal{E}}$-cohomology class
$[T]\in H^1(\mathcal{E})$, then
\begin{equation}\label{T'TS}
    T'=T+\mathrm{d}_{\mathcal{E}}S\,,
\end{equation}
for some $S\in C^\infty(M)$.  We refer to the last relation by
saying that the equations $T'=0$ are obtained by a \textit{proper
deformation} of the equations $T=0$; the function $S$ is called a
\textit{twist}. The relevance of this notion to the context of
conservation laws and proper symmetries is evident from the
following simple observation. If $S$ is invariant with respect to
the characteristic $\Psi\in \mathrm{Char}(T)$,
\begin{equation}\label{}
    L_{\Psi}S\doteq 0\,,
\end{equation}
then $\Psi \in \mathrm{Char}(T')$. Indeed,
\begin{equation}\label{}
i_{\Psi}T'=i_{\Psi}(T+\mathrm{d}_{\mathcal{E}}S)\doteq 0\,.
\end{equation}
Notice that the conserved current of the deformed equations $T'$
can be different from that of the equations $T$, even though the
characteristic is the same.

In general the proper deformation (\ref{T'TS}) can be quite
nontrivial and by no means  reduces to equivalence transformations
of the original equations $T$, like taking linear combinations of
the equations or change of variables. To exemplify the
non-triviality of the deformation (\ref{T'TS}), it is sufficient
to consider a proper deformations of the simplest possible
differential equations
\begin{equation}\label{xdot}
  T^i\equiv  \dot{x}{}^i(t)=0 \, .
\end{equation}
As is known, these equations enjoy an integrable Lagrange
structure given by a vertical Poisson bivector
$\alpha=\alpha^{ij}(t,x)\partial_i\wedge\partial_j$ on $Y\times
\mathbb{R}$.  Taking the twist $S$  in the form
\begin{equation}\label{HamiltonTwist}
    S= -\int dt H(t, x(t)) \, ,
\end{equation}
where $H(t, x)$ is an arbitrary function, we get the Hamiltonian
equations
\begin{equation}\label{Hamilton}
    T'^i=T^i-\alpha^{ij}\frac{\delta S}{\delta x^j} =\dot{x}^i
    -\{x^i,  H \} =0\, .
\end{equation}
So, the proper deformation is quite nontrivial as it can transform
the Hamiltonian equations with a given Hamiltonian into equations
with any other Hamiltonian and the same Poisson bracket. The
characteristics for the original equations (\ref{xdot}) have the
form $\Psi=\mathrm{d }\int f(x(t))dt$, where $f(x)$ is an
arbitrary function. If the twist (\ref{HamiltonTwist}) is
invariant with respect to $\Psi$, then $\{ f , H \} = \dot g(t)$
and the value $f - g$ is an integral of motion of
(\ref{Hamilton}).

\section{Field-theoretical examples}

To further illustrate the concept of  proper symmetry we consider
here some well-known examples of non-Lagrangian field equations.
In addition, we will see what the Lagrange anchor looks like in
local field theory.

\subsection{$p$-form  fields}
In this subsection, we consider the theory of free $p$-form fields
in the strength-tensor formalism. We will show that the notion of
Lagrange anchor enables  one to relate all the space-time
symmetries with conserved currents expressible through the
energy-momentum tensor. In the Lagrangian framework, relied on the
notion of gauge potential,  this relationship is just an immediate
consequence of the Noether theorem. Our consideration, however,
does not suppose the existence of any action functional.

Given an $n$-dimensional Riemannian manifold $(X,g)$, consider a
$p$-form field $F\in\Lambda^{p}(X)$ subject to equations
\begin{equation}\label{21}
T_1(F)\equiv dF=0\,,\qquad T_2(F)\equiv d\ast F=0\,.
\end{equation}
Here $*:\Lambda^{p}(X)\rightarrow \Lambda^{n-p}(X)$ is the Hodge
operator associated to the metric $g$.  The equations enjoy no
gauge symmetry obeying the Noether identities
\begin{equation}\label{}
    dT_1=0\,,\qquad d T_2=0\,.
\end{equation}
The last fact points up  the  non-Lagrangian nature of the field
equations (\ref{21}).\footnote{ Of course, one can solve the first
equation (\ref{21}) in terms of a gauge potential, $F=dA$,
following which the second equation (\ref{21}) becomes Lagrangian,
but we are interested in treating these equations as they are,
i.e., without passing to any equivalent Lagrangian formulation.}

To make contact with the general definitions of the previous
sections let us mention that the configuration space of fields $M$
is given here by the space $\Lambda^p(X)$ and the sections of the
dynamics bundle, in particular the equations of motion
$T=(T_1,T_2)$, take values in the space $\Lambda^{p+1}(X)\oplus
\Lambda^{n-p+1}(X)$. Since $M$ is a linear space, we can identify
the tangent space $T_FM$ at a point $F\in M$ with the space
$M=\Lambda^p(X)$ itself. Furthermore, by making use of the
standard inner product on $\Lambda(X)$,
\begin{equation}\label{IP}
    (A,B)=\int_X A\wedge * B\,,
\end{equation}
we identify the vector spaces $\Lambda^p(X)$ and
$\Lambda^{p+1}(X)\oplus \Lambda^{n-p+1}(X)$ with their dual
spaces.

One can easily check that the field equations (\ref{21}) admit the
two-parameter family of Lagrange anchors $V$ defined by
\begin{equation}\label{V}
    \langle V(W),P\rangle\dot = \langle W,V^\ast(P)\rangle =a (W_1, dP)+b(W_2,d* P)\,,
\end{equation}
for all $W=(W_1,W_2)\in \Lambda^{p+1}(X)\oplus
\Lambda^{n-p+1}(X)$, $P\in \Lambda^p(X)$, and $a,b\in \mathbb{R}$.
This family is a straightforward generalization of the Lagrange
anchor  proposed in \cite{KLS} for Maxwell's electrodynamics in
the strength-tensor formalism.

Since the equations of motion (\ref{21}) are linear and the anchor
(\ref{V}) does not depend on fields, the bracket (\ref{LB}) is
zero and the corresponding Lagrange structure appears to be
integrable. To investigate its transitivity we should consider the
kernel of the operator $V^\ast$. By definition, a $p$-form $P$
belongs to $\ker V^\ast$ if
\begin{equation}\label{VWP}
    \langle W,V^\ast(P)\rangle=0\qquad \forall
    W\,.
\end{equation}
Suppose that  $ab\neq 0$, then the last condition is equivalent to
the equations
\begin{equation}\label{dP}
    dP=0\,,\qquad d\ast P=0\,,
\end{equation}
which coincide in form with the equations of motion (\ref{21}) for
the field $F$. Since equations (\ref{dP}) enjoy no gauge symmetry,
their general solution $P$ cannot depend on arbitrary functional
parameters \footnote{For instance, if $X$ is a compact manifold
without boundary, then the solutions to Eqs. (\ref{dP}) are the
harmonic $p$-forms on $X$. These form a finite dimensional vector
space isomorphic to $H^p(X)$.}. In particular, it cannot depend on
the field $F$. We refer to this situation  by saying that the
Lagrange anchor (\ref{V}) is \textit{almost transitive} for
$ab\neq 0$.

Notice also that the Lagrange anchor (\ref{V}) is nontrivial
whenever  $a\neq b$. In case $a=b$, we can write (\ref{V}) as
\begin{equation}\label{VWPtriv}
    (W_1, T_1(aP))+ (W_2,T_2(aP))\,.
\end{equation}
The last expression  corresponds to the trivial Lagrange anchor
(\ref{Vtriv}) with $G(P)=aP$.

Recall that a vector field  $\xi\in \frak{X}(X)$ is called a
\textit{Killing vector} of the metric $g$ if $L_\xi g=0$. Each
Killing vector gives rise to both the symmetry transformation
\begin{equation}\label{}
    \delta_{\xi}F=L_\xi F
\end{equation}
of the field equations (\ref{21}) and the characteristic
$\Psi=(\Psi_1,\Psi_2)\in \Lambda^{p+1}(X)\oplus
\Lambda^{n-p+1}(X)$. The latter is given by
\begin{equation}\label{}
    \Psi_1=(-1)^{(n-p)(p-1)}\ast i_\xi \ast F\,,\qquad \Psi_2= (-1)^{p-1}\ast i_\xi F\,.
\end{equation}
It is easy to verify that
\begin{equation}\label{}
    \langle\Psi, T\rangle=(\Psi_1,T_1)+(\Psi_2,T_2)=\int_X dj\,,
\end{equation}
where the on-shell closed $(n-1)$-form $j$ reads
\begin{equation}\label{CCATF}
    j=\frac{1}{2}\left((i_{\xi} F)\wedge\ast F+(-1)^{p-1}F\wedge(i_{\xi}\ast
    F)\right)\,.
\end{equation}
In terms of local coordinates $\{x^\mu\}$ on $X$ the corresponding
conserved current is given by the 1-form
\begin{equation}\label{current}
    \ast j=\xi^{\mu}T_{\mu\nu}(x)dx^{\nu}\,,
\end{equation}
$T_{\mu\nu}=T_{\nu\mu}$ being the symmetric energy-momentum tensor
of the $p$-form field $F$.

In view of   Proposition \ref{prop2}, the characteristic $\Psi$ is
bound to be the generator of a proper symmetry.  A simple
calculation gives
\begin{equation}\label{fieldtranform}
\delta_\xi F=V(\Psi) \approx (a-b)L_{\xi}F\,.
\end{equation}
So, up to an overall constant, which can be included in $\xi$, the
transformation (\ref{fieldtranform}) coincides with the Lie
derivative of the $p$-form $F$ along the Killing vector $\xi$. As
would be expected, the transformation becomes trivial when $a=b$
(the case of trivial anchor (\ref{VWPtriv})).

To summarize,  each infinitesimal isometry of the Riemannian
manifold $(X,g)$ gives rise to a conservation law with the current
(\ref{current}). The Lagrange anchor (\ref{V}) allows us to
consider this conservation law as coming from the proper symmetry
(\ref{fieldtranform}) of the field equations (\ref{21}). Moreover,
as the Lagrange anchor is almost transitive for $ab\neq 0$, one
can be sure that each proper symmetry $V(\Psi)$ results in a
conservation law provided that the generator $\Psi$ depends on
$F$.

All the above formulas hold true for conformal Killing vectors
$\xi$ in the critical dimension $n=2p$. This  follows from the
conformal invariance of the Hodge operator $\ast:
\Lambda^p(X)\rightarrow \Lambda^p(X)$.

\subsection{Self-dual $p$-form fields}
Let $(X,g)$ be a $(4k+2)$-dimensional pseudo-Riemannian  manifold
of Lorentz signature. Then the Hodge operator
$\ast:\Lambda^{2k+1}(X)\rightarrow \Lambda^{2k+1}(X)$ squares to
$+1$ on the middle forms and we have the direct sum  decomposition
$\Lambda^{2k+1}(X)=\Lambda^{2k+1}_+(X)\oplus \Lambda^{2k+1}_-(X)$,
where $\Lambda_{\pm}^{2k+1}(X)$ are the subspaces of
(anti-)self-dual $(2k+1)$-forms on $X$. The subspaces
$\Lambda_{\pm}^{2k+1}(X)$ are known to be isotropic with respect
to the standard inner product (\ref{IP}). Therefore, the inner
product defines a non-degenerate pairing between the spaces
$\Lambda^{2k+1}_+(X)$ and $\Lambda^{2k+1}_-(X)$, so that we can
regard these spaces as being dual to each other.

Consider now a self-dual $(2k+1)$-form field $H^+$ subject to the
closedness condition
\begin{equation}\label{TH}
    T(H^+)\equiv dH^{+}=0\,.
\end{equation}
Regarding this condition as equations of motion, we see that the
sections of the corresponding dynamics bundle
$\mathcal{E}\rightarrow M$ take values in  $\Lambda^{2k+2}(X)$ and
$M$ is given by $\Lambda_+^{2k+1}(X)$. The tangent space to each
field configuration $H^+\in M$ can be identified with  $M$ as
before. Equations (\ref{TH}) are known to be non-Lagrangian unless
one introduces auxiliary fields or breaks the manifest covariance
\cite{MS, FJ, HT1, CWY, Sriv, PST}. There is the  Noether identity
$dT=0$ but no gauge symmetry. It was shown in \cite{LS3} that the
theory (\ref{TH}) admits a natural Lagrange structure defined by
the anchor
\begin{equation}\label{V+}
    \langle V(W),P\rangle\dot = \langle W,V^\ast(P)\rangle=(W,dP)\,,
\end{equation}
where $W\in \Lambda^{2k+2}(X)$ and $P\in \Lambda^{2k+1}_-(X)$.
Again, the Lagrange anchor is almost transitive, since the
condition $dP=0$ implies that the anti-self-dual form $P$ does not
depend on the field $H^+$.

If  $\xi\in \frak{X}(X)$ is a conformal Killing vector of the
metric $g$, then  $\Psi=-\ast i_\xi H^+\in \Lambda ^{2k+2}(X)$ is
a characteristic for (\ref{TH}). Indeed, using the identity
$H^+\wedge L_\xi H^{+}=0$, one can readily find
\begin{equation}\label{}
\langle\Psi,T\rangle=-(\ast i_\xi H^+, dH^+)=\int_X dj\,,\qquad
j=\frac 12i_\xi H^+\wedge H^+\,.
\end{equation}
By analogy with the previous case we can define the
energy-momentum tensor $T_{\mu\nu}$ of the self-dual field $H^+$
through  the conserved current $
    \ast j=\xi^\mu T_{\mu\nu}(x)dx^\nu$. It is easy  to see that the  tensor
    $T_{\mu\nu}$ is symmetric and traceless.

According to Proposition \ref{prop2}, the anchor (\ref{V+}) takes
the characteristic $\Psi$ to a proper symmetry of the equations of
motion (\ref{TH}). As one would expect, the corresponding
variation of $H^+$ is given by the Lie derivative along the
conformal Killing vector modulo on-shell vanishing terms:
\begin{equation}\label{XIH}
    \delta_\xi H^+=V(\Psi)\approx L_\xi H^+\,.
\end{equation}

Conversely, one could start with the proper symmetry
transformation (\ref{XIH}), which presence is evident from the
very definition of the theory (\ref{TH}),  and then argue that the
generator $\Psi$ is nothing else but a  characteristic to the
field equations since the Lagrange anchor (\ref{V+}) is almost
transitive. This gives the desired relation between the space-time
symmetries of the model and the conservation laws.

\subsection{Chiral bosons in two dimensions}

Consider now a multiplet  of $N$ self-dual 1-forms $H^+_a$,
$a=1,...,N$, in two dimensional Minkowski space
$\mathbb{R}^{1,1}$. As above, the field equations  have the form
of closedness condition
\begin{equation}\label{TH2}
    T_{a}(H^+)\equiv dH_{a}^{+}=0\,.
\end{equation}
Besides the space-time symmetries, considered in the previous
subsection, equation (\ref{TH2}) have internal rigid symmetries
corresponding to the general linear transformations in the target
space of fields:
\begin{equation}\label{G}
    H_a^+\;\; \rightarrow\;\; G_a{}^b H^+_b\,,\qquad (G_a{}^b)\in
    \mathrm{GL}(N,\mathbb{R})\,.
\end{equation}

The case of two dimensions is rather special at least for three
reasons. First, the subspaces of self-dual and anti-self-dual
1-forms admit a nice geometric visualization as a pair of
transversal isotropic distributions defining the light cone at
each point of $\mathbb{R}^{1,1}$. Second, the dynamical fields
$H^+_a$ are conserved current themselves  for if we set
$j_a=H^+_a$, then
\begin{equation}\label{djT}
    dj_a=T_a\approx 0\,.
\end{equation}
Third, and most important, the Lagrange anchor (\ref{V+}) admits a
non-abelian generalization\footnote{It should be stressed that
this generalization has nothing to do with a non-abelian
deformation of the original (abelian) equations of motion
(\ref{TH2}); the field equations remain the same.}, which can be
used to connect some of the rigid symmetries (\ref{G}) with the
conservation laws (\ref{djT}). To define this generalization we
interpret  the $N$-dimensional internal space of fields $H^+$ as
the linear space underlying  a semi-simple Lie algebra
$\mathcal{G}$ with a bases $\{t^a\}$ and the commutation relations
\begin{equation}\label{}
    [t^a,t^b]=f^{ab}_c t^c\,.
\end{equation}
(Of course, such an interpretation imposes certain restrictions on
the possible values of $N$. For example, it excludes $N=2$.) Then
we can combine the fields $H^+_a$ into a single
$\mathcal{G}$-valued 1-form $H^+=H_a^+ t^a$ subject to the
equation of motion $dH^+=0$. Geometrically, one can think of $H^+$
as a section of the trivial $SO(1,1)\times G$-vector bundle over
$\mathbb{R}^{1,1}$, where  $G$ is a Lie group  with the Lie
algebra $\mathcal{G}$.

Consider now the following non-abelian generalization of the
Lagrange anchor (\ref{V+}):
\begin{equation}\label{V+2}
\langle V(W),P\rangle\dot =\langle W,V^\ast
(P)\rangle=(W^{a},dP_{a}+g[P, H^+]_a)\,.
\end{equation}
Here $P=P_at^a$ is a $\mathcal{G}$-valued anti-self-dual 1-form
and $g$ is an arbitrary constant; all the Lie algebra indices are
raised and lowered with the help of the Killing metric on
$\mathcal{G}$. In case $g=0$, we have $N$ copies of the Lagrange
anchors (\ref{V+}).  For $g\neq 0$,  the Lagrange structure
remains integrable, but the corresponding Lie bracket (\ref{LB})
on the dual of the dynamics bundle becomes nontrivial:
\begin{equation}\label{}
    [e_a(x),e_b(x')]=-gf_{ab}^c e_c(x)\delta^2(x-x')\,,\qquad
    e_a\in \Lambda^2(\mathbb{R}^{1,1})\,.
\end{equation}

The  conserved currents (\ref{djT}) correspond to the
$\mathcal{G}$-valued characteristic
$\Psi=-\ast\varepsilon_{a}t^a$:
\begin{equation}\label{CH2}
    \langle\Psi ,T
    \rangle=-(\ast\varepsilon^{a},dH^{+}_a)=\int_{X}dj\,,
    \qquad j=\varepsilon^{a}H^{+}_a\,.
\end{equation}
By Proposition \ref{prop2} this characteristic generates  a proper
symmetry transformation
\begin{equation}\label{PS2}
   \delta_{\varepsilon}H^{+}=V(\Psi)=-g[\varepsilon,H^{+}]\,,\qquad
   \varepsilon=\varepsilon_{a}t^{a}\,.
\end{equation}
Thus, we see that  the proper symmetries constitute a subgroup
$\mathrm{Ad}(G)\subset \mathrm{GL}(N,\mathbb{R})$ in the group of
all internal symmetries (\ref{G}) of the equations of motion.

As to the relativistic symmetries of the system (\ref{TH2}), these
are proper and have the same form as in the abelian case
(\ref{XIH}).

\section{Conclusion}
Let us make some concluding remarks on  the paper results and
notice the open questions left for the future studies.

In the work \cite{KLS}, the concept of Lagrange anchor was
introduced to formulate the path-integral quantization for not
necessarily Lagrangian field theories. In the present paper, we
show that another important property of the Lagrangian dynamics -
the relationship between rigid symmetries and conservation laws -
extends to non-Lagrangian field equations whenever they are
endowed with a Lagrange anchor. It was also shown in \cite{KLS}
that every Lagrange anchor gives rise to and can be related with a
certain BRST complex on the ghost-extended configuration space of
fields. In the Lagrangian gauge theory, this complex boils down to
the well-known  BRST complex associated with the
Batalin-Vilkovisky master action. The unique existence of the
local BV master action was proven long ago \cite{H2} under the
assumptions that the original Lagrangian was local and the gauge
symmetry was finitely reducible. Unlike the Lagrangian theory, the
locality of both the Lagrange anchor and the non-Lagrangian
equations does not necessarily result in the locality of the
corresponding quantum BRST differential. Given a local Lagrange
anchor for local field equations, the obstructions to the
existence of a local  BRST complex will be studied in our next
paper. Remarkably, these obstructions, being crucial for the
quantum theory, do not matter for the study of classical dynamics.
At the classical level, the existence of a local Lagrange anchor
is sufficient to link the conservation laws with rigid symmetries,
as it is seen from the present paper.

In the next work we also plan to develop the cohomological
description of the rigid symmetries and conservation laws, making
use of the BRST formalism for non-Lagrangian gauge theories
\cite{KLS}. This will extend the cohomological formulation of the
Noether theorem, well studied in the Lagrangian setting
\cite{BBH}, to not necessarily Lagrangian dynamics. In the entire
BRST cohomology, we will identify the characteristic cohomology as
well as  the cohomology group capturing the rigid symmetries of
non-Lagrangian equations of motion. Unlike the Lagrangian theory,
these two groups of the BRST cohomology are not to be generally
isomorphic to each other.

One can see that many important ingredients of the Lagrangian
dynamics split into different categories in the non-Lagrangian
setting. The local Lagrange anchor  connects these categories,
although not always making them identical. As a result, the
Lagrange anchor endows the non-Lagrangian local field theory with
some important features of the Lagrangian one. In particular, it
offers a path-integral quantization of classical dynamics and
establishes a (partial) connection between the rigid symmetries
and the conservation laws.


\begin{thebibliography}{1}\label{bib}


\bibitem{BFV}
I.A. Batalin  and G.A. Vilkovisky, \textit{Relativistic S-matrix
of dynamical systems with boson and fermion constraints}, Phys.
Lett. \textbf{B69} (1977) 309; I.A. Batalin and E.S. Fradkin,
\textit{ A generalized canonical formalism and quantization of
reducible gauge theories}, Phys. Lett. \textbf{B122} (1983) 157.

\bibitem{BV} I.A. Batalin and G.A. Vilkovisky, \textit{ Gauge algebra and quantization}, Phys. Lett. \textbf{B102}
(1981) 27; \textit{ Quantization of gauge theories with linearly
dependent generators}, Phys. Rev. \textbf{D28} (1983) 2567;
\textit{ Existence theorem for gauge algebra}, J. Math. Phys.
\textbf{26} (1985) 172.

\bibitem{HT}  M. Henneaux and C. Teitelboim, Quantization of gauge
systems (Princeton University Press, Princeton, NJ, 1992).

\bibitem{H1} M. Henneaux, \textit{Consistent interactions between gauge fields: The cohomological
approach}, Contemp. Math. \textbf{219} (1998) 93.

\bibitem{BBH} G. Barnich, F. Brandt, and  M. Henneaux,
\textit{Local BRST cohomology in gauge theories}, Phys. Rept.
\textbf{338} (2000) 439.

\bibitem{LS1} S.L. Lyakhovich and A.A. Sharapov,
\textit{BRST theory without Hamiltonian and Lagrangian}, JHEP
0503:011.

\bibitem{KLS} P.O. Kazinski, S.L.Lyakhovich, and A.A.Sharapov,
\textit{Lagrange structure and quantization},  JHEP 0507:076.

\bibitem{CaFe}  A.S.~Cattaneo and  G.~Felder, \textit{Relative formality
theorem and quantisation of coisotropic submanifolds}, Adv. in
Math., \textbf{208} (2007) 521.

\bibitem{LS2} S.L. Lyakhovich and A.A. Sharapov,
\textit{Schwinger-Dyson equation for non-Lagrangian field theory},
JHEP 0602:007.

\bibitem{LSDUY} S.L. Lyakhovich and A.A. Sharapov, \textit{Quantization of Donaldson-Uhlenbeck-Yau
 theory}, Phys. Lett. \textbf{B656}   (2007) 265.

\bibitem{LS3} S.L. Lyakhovich and A.A. Sharapov,  \textit{Quantizing non-Lagrangian gauge theories: An augmentation method},
JHEP 0701:047.

\bibitem{LSDB} S.L. Lyakhovich and  A.A. Sharapov, \textit{Normal forms and gauge
symmetries of local dynamics}, J. Math. Phys. \textbf{50} (2009)
083510.


\bibitem{KMS} I. Kolar, P. Michor, and J. Slovak, Natural operations in
differential geometry (Springer-Verlag: Berlin Heidelberg, 1993).

\bibitem{Saund} D.J. Saunders, The geometry of jet bundles (Cambridge
University Press, Cambridge, 1989).

\bibitem{Olv} P.J. Olver, Application of Lie groups to
differential equations (Springer-Verlag, New York, 1986).

\bibitem{KLV} I.S. Krasil'shchik, V.V. Lychagin, and A.M. Vinogradov, Geometry of jet spaces and nonlinear differential equations
(Gordon and Breach, New York, 1986).

\bibitem{DW} B.~DeWitt, Dynamical theory of groups and fields (Gordon and Breach, New
York, 1965).

\bibitem{BrGr} R.L. Bryant and P.A. Griffits, \textit{Characteristic cohomology of differential systems I. General
theory}, J. Am. Math. Soc. \textbf{8} (1995) 507.

\bibitem{Vin} A.M. Vinogradov, \textit{The $\mathcal{C}$-spectral sequence, Lagrangian formalism, and conservation laws. I.
The linear theory}, J. Math. Anal. Appl. \textbf{100} (1984) 1;
\emph{The $\mathcal{C}$-spectral sequence, Lagrangian formalism,
and conservation laws. II. The nonlinear theory}, J. Math. Anal.
Appl. \textbf{100} (1984) 41.


\bibitem{BH} G. Barnich and M. Henneaux, \textit{Isomorphisms between the
Batalin-Vilkovisky antibracket and the Poisson bracket}, J. Math.
Phys. \textbf{37} (1996) 5273.

\bibitem{BG} G. Barnich and M. Grigoriev, \textit{A Poincar\'e lemma for
sigma models of AKSZ type}, \verb"arXiv:0905.0547".



\bibitem{Mac} K. Mackenzie, General theory of Lie groupoids and Lie
algebroids (Cambridge University Press, Cambridge, 2005).

\bibitem{CdSW} A. Cannas da Silva and A. Weinstein,
Geometric models for noncommutative algebras (AMS, Providence, RI,
USA, 1999).

\bibitem{Su} H. Sussmann, \textit{Orbits of families of vector fields and integrability
of distributions}, Trans. Amer. Math. Soc. \textbf{180} (1973)
171.



\bibitem{HKS}M. Henneaux, B. Knaepen, and C. Schomblond,  \textit{Characteristic cohomology of p-form gauge theories},
Commun. Math. Phys. \textbf{186} (1997) 137.


\bibitem{MS} N. Marcus and J.H. Schwarz, \textit{Field theories that have no manifestly Lorentz invariant
formulation}, Phys. Lett. \textbf{B115} (1982) 111.

\bibitem{FJ} R. Floreanini and R. Jackiw, \textit{Self-dual fields as charge density solitons}, Phys. Rev. Lett. \textbf{59} (1987) 1873.

\bibitem{HT1} M. Henneaux and C. Teitelboim, \textit{Dynamics of chiral (self-dual) p-forms},  Phys. Lett. \textbf{B206} (1988) 650.

\bibitem{CWY} B. McClain, Y.S. Wu, and F.Yu, \textit{Covariant quantization of chiral bosons and OSp(1,1-2) symmetry},
 Nucl. Phys. \textbf{B343} (1990) 689.

\bibitem{Sriv} P.P. Srivastava, \textit{On a gauge theory of self-dual field and its quantization}, Phys. Lett. \textbf{B234} (1990) 93.

\bibitem{PST} P. Pasti, D. Sorokin, and M. Tonin, \textit{On Lorentz invariant actions for chiral p-forms},  Phys. Rev.
\textbf{D55} (1997) 6292.

\bibitem{AKSZ}M. Alexandrov, M. Kontsevich, A. Schwarz and O. Zaboronsky, \emph{The geometry of the
master equation and topological quantum field theory}, Int. J. Mod.
Phys. \textbf{A12} (1997) 1405.


\bibitem{H2}  M. Henneaux, \textit{ Spacetime locality of the BRST
formalism},  Commun. Math. Phys. \textbf{140} (1991) 1.

\end{thebibliography}
\end{document}